\newtheorem{definition}{Definition}
\newtheorem{lemma}{Lemma}
\newtheorem{theorem}{Theorem}
\newtheorem{proof}{IEEEproof}
\newtheorem{corollary}{Corollary}
\newtheorem{example}{Example}
\def\olg#1{\mathit{Olg}(#1)}
\def\phi{\varphi}
\def\theta{\vartheta}
\newcommand{\tran}[1]{\xrightarrow[]{#1}}
\def\ff{\mathit{ff}}
\def\tt{\mathit{tt}}
\def\Polsat{{Polsat}}
\def\Aalta{{Aalta}}
\def\expand{\mathit{NF}}
\def\of#1{\mathit{of}(#1)}
\def\ofp#1{\mathit{ofp}(#1)}
\def\nondeter{\perp}
\def\cur{\mathit{cur}}
\def\inf{\mathit{inf}}
\title{Fast LTL Satisfiability Checking by SAT Solvers}
\author{
    \IEEEauthorblockN{Jianwen Li\IEEEauthorrefmark{1}, Geguang Pu\IEEEauthorrefmark{1}, Lijun Zhang\IEEEauthorrefmark{2},
    Moshe Y. Vardi\IEEEauthorrefmark{3} and Jifeng He\IEEEauthorrefmark{1}}
    \IEEEauthorblockA{\IEEEauthorrefmark{1}Software Engineering, East China Normal University
   }
    \IEEEauthorblockA{\IEEEauthorrefmark{2}State Key Laboratory of Computer Science, Institute of Software, Chinese Academy of Sciences 
   }
    \IEEEauthorblockA{\IEEEauthorrefmark{3}Computer Science, Rice University
        }
}
\author{Jianwen Li\inst{1} \and
Geguang Pu\inst{1} \and
Lijun Zhang\inst{2} \and \\
Moshe Y. Vardi\inst{3} \and
Jifeng He\inst{1}
}
\institute
{
  \inst{}%
%  Programming, Logic, and Semantics Group\\
Shanghai Key Laboratory of Trustworthy Computing, East China Normal University, %Shanghai,
 P. R.China
  \and
  \inst{}%
State Key Laboratory of Computer Science, Institute of Software, Chinese Academy of Sciences
\and
\inst{}
Department of Computer Science, Rice University, USA
}
\begin{document}

\maketitle

\begin{abstract}
Satisfiability checking for Linear Temporal Logic (LTL) is a fundamental 
step in checking for possible errors in LTL assertions. Extant LTL
satisfiability checkers use a variety of different search procedures. With 
the sole exception of LTL satisfiability checking based on bounded model
checking, which does not provide a complete decision procedure,
LTL satisfiability checkers have not taken advantage of the remarkable 
progress over the past 20 years in Boolean satisfiability solving.
In this paper, we propose a new LTL satisfiability-checking framework that 
is accelerated using a Boolean SAT solver.  Our approach is based on the 
variant of the \emph{obligation-set method}, which we proposed in
earlier work. We describe here heuristics that allow the use of a
Boolean SAT solver to analyze the obligations for a given LTL formula. 
The experimental evaluation indicates that the new approach provides
a significant performance advantage.
\end{abstract}

\section{Introduction}

The \emph{satisfiability problem} for Linear Temporal Logic (LTL)
asks whether a given LTL formula is satisfiable \cite{SC85}.  LTL 
satisfiability checking plays an important role in checking the 
consistency of linear temporal specifications that are often used 
in an early stage of system design \cite{RV10,RV11}.  Thus, efficient 
decision procedures to reason about large LTL formulas are quite 
desirable in practice.  

There have been several approaches proposed to deal with the LTL
satisfiability checking problem.  The model-checking 
approach reduces LTL satisfiability to LTL model checking by model checking
the \emph{negation} of the given formula against a \emph{universal} model.
This approach uses either explicit \cite{RV10} or symbolic \cite{RV11}
model checking.  The tableau-based \cite{Sch98} and antichain-based
\cite{DDMR08} approaches apply an on-the-fly search in the underlying 
automaton transition system.  The temporal-resolution-based method 
explores the unsatisfiable core using a deductive system \cite{HK03}. 
Our own previous work \cite{LZPVH13}, embodied in the \Aalta\ LTL 
satisfiability checker, follows the automata-based approach and reduces 
satisfiability checking to emptiness checking of the transition system 
by adopting two new heuristic techniques, using \emph{on-the-fly search} and 
\emph{obligation sets}.

Previous experimental evaluations across a wide spectrum of benchmarks 
\cite{RV10,RV11,SD11} concluded that none of existing approaches
described above dominate others. To establish a high-performance LTL 
satisfiability checker, we introduced a \emph{portfolio} LTL solver named 
\Polsat\ \cite{LPZVHCoRR13}, which runs several approaches in parallel,
terminating with the fastest thread. By definition, \Polsat\ is the
best-performing LTL satisfiability checker (subject to
constraints on the number of parallel threads).

An interesting observation in \cite{LPZVHCoRR13} is that the 
bounded-model-checking (BMC) technique \cite{CBRZ01} is the fastest 
on satisfiable formulas, as it leverages the tremendous progress 
demonstrated by Boolean satisfiability (SAT) solvers over the last 20 years 
\cite{MZ09}.  At the same time, BMC can detect satisfiability, but not 
unsatisfiability, which means that this approach does not provide a 
complete decision procedure. Nevertheless, the impressive performance of the
BMC-based approach inspired us to explore other possibilities of leveraging 
SAT solvers in LTL satisfiability checking.

We propose here an LTL satisfiability-checking framework that can be greatly
accelerated by using SAT solvers. The key idea here is of using
\emph{obligation formulas}, which are Boolean formulas
collecting satisfaction information from the original LTL
formula.  Intuitively, an LTL formula is satisfiable if the
corresponding Boolean \emph{obligation formula} is satisfiable. 
Using obligation formulas makes it possible to utilize SAT solving, 
since it eliminates the temporal information of LTL formula. Based on
obligation formulas, we extend the approach proposed in \cite{LZPVH13}
by presenting two novel techniques to accelerate satisfiability
checking procedure with SAT solvers.  In contrast to the BMC-based approach,
our method is both sound and complete, as it can also check unsatisfiable 
formulas.

To illustrate the efficiency of our new approach, we integrate our
implementation, \Aalta\_v0.2 into \Polsat, which also provides a
testing environment for LTL solvers.  The experiments show that 
while still no solver dominates across all benchmarks,  \Aalta\_v0.2
is much more competitive with other LTL satisfiability checkers than
\Aalta\_v0.1. More significantly, the performance of \Polsat\ improves 
dramatically as a result of replacing  \Aalta\_v0.1 by  \Aalta\_v0.2.

\noindent
\textbf{Contributions}: The three main contributions of the paper are as 
follows: 1) We extend the concept of \emph{obligation set} to that of
\textit{obligation formulas}, which enables us to leverage Boolean
satisfiability solving in LTL satisfiability solving.
2) We offer two novel SAT-based heuristics to boost the checking of 
satisfiable and unsatisfiable formulas respectively. 
3) We present a new tool, \Aalta\_v0.2, which is integrated into \Polsat\ 
and evaluated over large set of benchmark formulas. The experiments show 
that the new approach is both effective and efficient: the performance of
\Polsat\ improves 10-fold in some cases, and an average of 30\% to 60\% 
speed-up on random formulas.

\noindent
\textbf{Paper Structure}: The paper is organized as
follows. Section \ref{sec:pre} introduces the preliminaries about LTL and our
previous work~\cite{LZPVH13}. Section \ref{sec:of} provides the theoretical
framework of this paper. In Section \ref{sec:ofa}, we describe two techniques, 
based on SAT solving, to accelerate satisfiability checking respectively for
satisfiable and unsatisfiable formulas. The empirical framework
is described in Section \ref{sec:exp}. Section \ref{sec:related} discusses 
related work, and finally Section \ref{sec:con} concludes the paper.%

\section{Preliminaries}\label{sec:pre}
\subsection{Linear Temporal Logic}\label{sec:ltl}
Let $AP$ be a set of atomic properties. 
The syntax of LTL formulas is defined by:
\begin{align*}
\phi\ ::=\ \tt \mid \ff\mid a \mid \neg \phi \ |\ \phi\wedge \phi\ |\ \phi\vee \phi\ |\ \phi U \phi\ |\ \phi R\ \phi\ |\ X \phi
\end{align*}
where $a\in AP$, $\phi$ is an LTL formula.  We use the usual
abbreviations: 
$Fa=\tt Ua$, and $Ga=\ff Ra$. 

We say $\phi$ is a propositional formula if it does not contain
temporal operators. We say $\phi$ is a \emph{literal}
if it is an atomic proposition or its negation.  We use $L$ to denote
the set of literals, lower case letters $a,b,c,l$ to denote
literals, $\alpha$ to denote propositional
formulas, and $\phi, \psi$ for LTL formulas.  
In this paper, we consider LTL formulas in negation normal form (NNF) -- all negations are 
pushed in front of atomics. 
LTL formulas are often interpreted over $(2^{AP})^\omega$. Since we consider LTL in NNF, formulas are interpreted on infinite literal sequences
$\Sigma:= (2^L)^\omega$.

A \emph{trace} $\xi=\omega_0\omega_1\omega_2\ldots$ is an infinite sequence over
$\Sigma^\omega$. For $\xi$ and $k\geq 1$ we use $\xi ^k=\omega_0\omega_1\ldots
\omega_{k-1}$ to denote the prefix of $\xi$ up to its $k$-th element,
and $\xi _k =\omega_k\omega_{k+1}\ldots$ to denote the suffix of $\xi$
from its $(k+1)$-th element.  Thus, $\xi=\xi^k \xi_k$. 
\iffalse
To introduce the semantics of LTL formulas, we need the notion of \emph{consistent traces} 
at first:

\begin{definition}[Consistent Trace]
  We say a literal set $A$ is \emph{consistent} iff for all $a\in A$
  we have that $\bigwedge a\not\equiv\ff$. A trace $\xi=\omega_0\omega_1\ldots$ is
  consistent iff $\omega_i$ is consistent for all $i$.
\end{definition}

In the remainder of this paper, if not stated explicitly, all traces
considered are assumed to be consistent. 
\fi
The semantics of temporal operators with respect to an infinite trace $\xi$ is 
given by: $\xi\models \alpha$ iff $\xi^1\models\alpha$;
$\xi \models X\ \phi$ iff $\xi_1\models \phi$; and
\begin{itemize}
\item $\xi \models \phi_1\ U\ \phi_2$ iff there exists $i\geqslant 0$
such that $\xi_i\models \phi_2$ and for all $0 \leqslant j < i,
\xi_j\models \phi_1$;
\item $\xi \models \phi_1\ R\ \phi_2$ iff either
  $\xi_i\vDash\phi_2$ for all $i\geq 0$, or there exists $i\ge 0$
  with $\xi_i\models\phi_1\wedge\phi_2$ and $\xi_j\vDash\phi_2$ for
  all $0\leq j< i$.
\end{itemize}
According to the semantics, it holds $\phi R\psi=\neg (\neg \phi U\neg
\psi)$. Now we define the satisfiability of LTL formulas as follows:

\begin{definition}[Satisfiability]
We say $\phi$ is satisfiable if there
exists an infinite trace $\xi$ such that $\xi\models\phi$.
\end{definition}

\iffalse
\textbf{Notation.} We define some notation that we use throughout this 
paper.
\begin{itemize}
\item For a formula $\phi$, we use $cl(\phi)$ to denote the set of
  subformulas of $\phi$. We denote by $AP_\phi$ the set of atoms
  appearing in $\phi$, by $L_\phi$ the set of literals over $AP_\phi$,
  and by $\Sigma_\phi$ the set of consistent literal sets over $AP_\phi$.
\item Let $\phi = \bigwedge_{i\in I} \phi_i$ such that the root
operator of $\phi_i$ is not a conjunctive.  We define the set of
conjuncts of $\phi$ as $CF(\phi):=\{\phi_i \mid i\in I\}$.
When $\phi$ does not include a conjunctive as a root operator,
$\mathit{CF}(\phi)$ includes only $\phi$ itself. The set of
disjuncts $DF(\phi)$ is defined in an analogous way.
\iffalse
\item 
For a formula $\phi$ of the form $\phi_1 U \phi_2$ or $\phi_1 R \phi_2$,
let $\mathit{left}(\phi)$ ($\mathit{right}(\phi)$) be left (right) 
subformulas of $\phi$.
\fi
\item For all propositional formula
$\alpha$ appearing in the paper, we always check first whether
$\alpha$ is satisfiable. If not, we shall replace $\alpha$ by $\ff$.
\end{itemize}
\fi

\subsection{Obligation-Based Satisfiability Checking}

This section recalls the fundamental theories on obligation-based satisfiability 
checking in our previous work \cite{LZPVH13}. For more details readers can refer to the literature.

\noindent
\textbf{Obligation Set} The \textit{obligation set} defined below is the fundamental part of the generalized satisfiability checking in our previous work.

\begin{definition}[Obligation Set]\label{def:os}
  For a formula $\phi$, we define its obligation set, denoted by
  $\olg{\phi}$, as follows:
  \begin{itemize}
  \item $\olg{\tt}=\{\emptyset\}$ and $\olg{\ff}=\{\{\ff\}\}$;
   \item If $\phi$ is a literal, $\olg{\phi}=\{ \{\phi\}\}$;
    \item If $\phi=X\psi$, $\olg{\phi}=\olg{\psi}$;
    \item If $\phi=\psi_1\vee\psi_2$, $\olg{\phi}=\olg{\psi_1}\cup \olg{\psi_2}$;
    \item If $\phi=\psi_1\wedge\psi_2$, $\olg{\phi}=\{O_1\cup O_2 \mid O_1\in \olg{\psi_1}\wedge O_2\in \olg{\psi_2}\}$;
    \item If $\phi=\psi_1 U\psi_2$ or $\psi_1 R\psi_2$, $\olg{\phi}=\olg{\psi_2}$.
 \end{itemize}
 For $O\in \olg{\phi}$, we refer to it as an \emph{obligation} of
 $\phi$. Moreover, we say $O$ is a consistent obligation iff $\bigwedge a\not\equiv\ff$ holds, 
 where $a\in O$.
\end{definition}

From the definition of \textit{obligation} above, one can check easily the following theorem 
is true:
\begin{theorem}[Obligation Acceleration \cite{LZPVH13}]\label{thm:oa}
  Assume $O\in \olg{\phi}$ is a consistent obligation. Then,
  $O^\omega\models \phi$.
\end{theorem}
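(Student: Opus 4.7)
The plan is a structural induction on $\phi$, following the clauses defining $\olg{\phi}$. The base cases fall out immediately. If $\phi = \tt$ the obligation is $\emptyset$ and any trace satisfies $\tt$; if $\phi = \ff$ the only obligation $\{\ff\}$ fails the consistency hypothesis, so the claim is vacuous; if $\phi$ is a literal $a$ then $O = \{a\}$ and the trace $\{a\}\{a\}\ldots$ satisfies $a$ at position zero.

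For the inductive step, the cases $X\psi$, $\psi_1 \vee \psi_2$, $\psi_1 U \psi_2$, and $\psi_1 R \psi_2$ reduce routinely to the induction hypothesis, using the crucial identity $(O^\omega)_i = O^\omega$ for every $i \geq 0$. For $X\psi$, the IH on $\psi$ combined with this identity at $i = 1$ yields $O^\omega \models X\psi$. For $\psi_1 \vee \psi_2$, the obligation $O$ lies in $\olg{\psi_j}$ for some $j \in \{1,2\}$ and the IH on that disjunct suffices. For $\psi_1 U \psi_2$, the witness $i = 0$ works: the IH on $\psi_2$ gives $O^\omega \models \psi_2$ while the universal clause over $j < 0$ is vacuous. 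For $\psi_1 R \psi_2$, the first branch of the release semantics is met because every suffix equals $O^\omega$ and satisfies $\psi_2$ by the IH.

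The main obstacle is the conjunction case $\phi = \psi_1 \wedge \psi_2$. Here $O = O_1 \cup O_2$ with $O_i \in \olg{\psi_i}$; the consistency of $O$ propagates to each $O_i$, so the IH yields $O_i^\omega \models \psi_i$ for $i = 1,2$. The gap is that we need $(O_1 \cup O_2)^\omega$, not merely $O_i^\omega$, to satisfy each $\psi_i$. I would bridge this with an auxiliary \emph{literal monotonicity} lemma for NNF formulas: if $\omega \subseteq \omega'$ as literal sets and $\omega'$ is consistent, then $\omega^\omega \models \psi$ implies $\omega'^\omega \models \psi$ for every NNF formula $\psi$. The lemma is proved by a separate structural induction, the crux being the literal clause, where $\ell \in \omega$ forces $\ell \in \omega'$ and consistency of $\omega'$ prevents any previously true literal from being contradicted by a new one; all other clauses propagate monotonicity directly, and the temporal cases use $(\omega^\omega)_i = \omega^\omega$ again. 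Applying the lemma to the inclusions $O_1 \subseteq O_1 \cup O_2$ and $O_2 \subseteq O_1 \cup O_2$ (both with consistent right-hand side) upgrades the two instances of the IH into $(O_1 \cup O_2)^\omega \models \psi_j$ for $j = 1,2$, which together give $O^\omega \models \psi_1 \wedge \psi_2$ and close the induction. The most delicate point to verify in practice is that the semantics of literal satisfaction over $2^L$ is consistent with this monotone reading, which is exactly why the paper works with literal sets rather than propositional valuations.
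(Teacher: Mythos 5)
Your proof is correct, and it supplies in full what the paper deliberately leaves out: the paper states Theorem~\ref{thm:oa} without proof, remarking only that ``one can check easily'' that it follows from Definition~\ref{def:os} and citing the earlier work \cite{LZPVH13}. Your structural induction is exactly the argument being alluded to, and you correctly identify the only non-trivial point, the conjunction case, where the induction hypothesis gives $O_i^\omega\models\psi_i$ but one needs $(O_1\cup O_2)^\omega\models\psi_i$. Your literal-monotonicity lemma bridges this gap soundly; note only that under the membership reading of literal satisfaction over $2^L$ (which is the semantics the paper intends, since traces are literal sequences and all formulas are in NNF), monotonicity needs no consistency hypothesis at all: every semantic clause is positive, so enlarging each position of a trace can never falsify an NNF formula, and your worry about a new literal ``contradicting'' an old one cannot arise because negated atoms are just literals tested by membership. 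Consistency is needed only where the theorem uses it, namely to rule out the $\{\ff\}$ obligation and to ensure $O^\omega$ corresponds to a genuine model. A small streamlining worth knowing: instead of a separate lemma, one can strengthen the induction hypothesis to ``for every $S\supseteq O$, $S^\omega\models\phi$,'' which makes the conjunction case immediate (take $S=O_1\cup O_2$ for both conjuncts) and incidentally yields the superset form that Theorem~\ref{thm:scc} actually relies on, where $L(scc)$ is only a superset of some obligation.
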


\noindent
\textbf{Obligation-Based Satisfiability Checking}
Theorem \ref{thm:oa} is sound but not complete. If no consistent
obligations are found, we shall then explore the \textit{LTL
  Transition System}, which uses the \textit{Normal Form} defined as
follows:

\begin{definition}[Normal Form]\label{def:expansion}
  The \textit{normal form} of an LTL formula $\phi$, denoted as
  $\expand(\phi)$, is a set defined as follows:
\begin{compactenum}
\item $\expand(\phi) =\{\phi \wedge X(\tt)\}$ if $\phi\not=\ff$ is a
  propositional formula. If $\phi =\ff$, we define $\expand(\ff)=\emptyset$;
\item $\expand(X\phi) = \{\tt\wedge X(\psi)\mid \psi\in DF(\phi)\}$;
\item $\expand(\phi_1 U \phi_2) = \expand(\phi_2)\cup \expand( \phi_1 \wedge
  X(\phi_1 U \phi_2))$;
\item $\expand(\phi_1 R \phi_2) = \expand(\phi_1 \wedge \phi_2) \cup \expand( \phi_2 \wedge X(\phi_1 R \phi_2))$;
\item $\expand(\phi_1 \vee \phi_2) = \expand(\phi_1)\cup \expand(\phi_2)$;
\item $\expand(\phi_1\wedge\phi_2) = \{(\alpha_1\wedge\alpha_2) \wedge X(\psi_1\wedge\psi_2)\mid \forall i=\mathit{1,2}. \ \alpha_i\wedge X(\psi_i)\in \expand(\phi_i)\}$.
\end{compactenum}
\end{definition}

Note here let $\phi = \bigvee_{i\in I} \phi_i$ such that the root
operator of $\phi_i$ is not a disjunction, and then $DF(\phi):=\{\phi_i \mid i\in I\}$ 
is defined as the set of
disjuncts of $\phi$. Now we introduce the \textit{LTL transition system}:

\begin{definition}[LTL Transition System]\label{def:lts}
  Let $\phi$ be the input formula. The
  labeled transition system $T_{\phi}$ is a tuple $\langle Act,
  S_\phi, \tran{}, \phi\rangle$ where: 
  
  \iffalse
  1). $\phi$ is the initial state; 
  2). $Act$ is the set of conjunctive formulas over $L_{\phi}$; 
  3). the transition relation $\tran{}\ \subseteq S_\phi\times Act\times S_\phi$ is defined by:
    $\psi_1\tran{\alpha}\psi_2$ iff there exists $\alpha\wedge
    X(\psi_2) \in \expand(\psi_1)$; and 4). $S_\phi$ is the smallest set of formulas such that $\psi_1\in
  S_\phi$, and $\psi_1\tran{\alpha}\psi_2$ implies
  $\psi_2\in S_\phi$.
  \fi
  \begin{compactenum}
  \item  $\phi$ is the initial state, 
  \item $Act$ is the set of conjunctive formulas over $L_{\phi}$,
 \item the transition relation $\tran{}\ \subseteq S_\phi\times Act\times S_\phi$ is defined by:
    $\psi_1\tran{\alpha}\psi_2$ iff there exists $\alpha\wedge
    X(\psi_2) \in \expand(\psi_1)$,
\item $S_\phi$ is the smallest set of formulas such that $\psi_1\in
  S_\phi$, and $\psi_1\tran{\alpha}\psi_2$ implies
  $\psi_2\in S_\phi$.
  \end{compactenum}
\end{definition}

For a strong connected component (SCC) $scc$, we use $L(scc)$ to denote the set of literals that along with $scc$. 
Then we have the following theorem:

\begin{theorem}[Obligation-Based Satisfiability Checking \cite{LZPVH13}]\label{thm:scc}
  The formula $\phi$ is satisfiable iff there exists a SCC
  $scc$ of $T_\phi$ and a state $\psi$ in $scc$ such that $L(scc)$ is a
  superset of some obligation $O\in \olg{\psi}$.
\end{theorem}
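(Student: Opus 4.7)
The plan is to prove the two directions separately, using Theorem \ref{thm:oa} for sufficiency and a fair-path argument in the finite transition system $T_\phi$ for necessity.

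For the ``if'' direction, assume such an SCC $scc$, state $\psi \in scc$, and obligation $O \in \olg{\psi}$ with $O \subseteq L(scc)$ are given. Since $\psi \in S_\phi$, there is a finite path $\phi \tran{\alpha_0} \psi_1 \tran{\alpha_1} \cdots \tran{\alpha_{k-1}} \psi$ in $T_\phi$. By the Obligation Acceleration Theorem \ref{thm:oa}, the word $O^\omega$ satisfies $\psi$. Because $scc$ is strongly connected and every literal of $O$ appears along some edge of $scc$, I can stitch together a lasso starting and ending at $\psi$ whose edge-labels, taken together, cover $O$; iterating this lasso produces an infinite path in $scc$ whose labels, viewed as a literal sequence $\sigma \in (2^L)^\omega$, contains $O$ at every position along the relevant cycle. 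The soundness lemma for $\expand$ (each transition $\psi_1 \tran{\alpha} \psi_2$ guarantees that $\alpha \xi \models \psi_1$ whenever $\xi \models \psi_2$) then propagates satisfaction backwards along the finite prefix, yielding an infinite trace $\alpha_0 \alpha_1 \cdots \alpha_{k-1} \sigma$ that models $\phi$.

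For the ``only if'' direction, assume $\xi = \omega_0 \omega_1 \cdots \models \phi$. First, I would build a corresponding infinite run $\phi = \psi_0 \tran{\alpha_0} \psi_1 \tran{\alpha_1} \cdots$ in $T_\phi$ such that $\alpha_i \subseteq \omega_i$ and $\xi_{i+1} \models \psi_{i+1}$ for all $i$; this is done by repeatedly choosing an element of $\expand(\psi_i)$ consistent with $\omega_i$, whose existence follows by induction on formula structure from the definition of $\expand$. Since $S_\phi$ is finite, this run eventually enters an SCC $scc$, and some state $\psi$ is visited infinitely often. Fix an index $j$ with $\psi_j = \psi$; then $\xi_j \models \psi$.

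The main obstacle is the final step: exhibiting an obligation $O \in \olg{\psi}$ with $O \subseteq L(scc)$. The idea is to walk through the structure of $\psi$ (following Definition \ref{def:os}) and, using $\xi_j \models \psi$, pick a disjunctive witness at every $\vee$, drop $X$ modalities, and, crucially, resolve every $U$ and $R$ subformula by descending into its right-hand side. The obligation $O$ produced this way is a set of literals that must appear within the suffix trace $\xi_j$ in a manner that keeps the run inside $scc$: for an $R$-subformula, the right-hand literal must be present continuously, and for a $U$-subformula, recurrence within the SCC forces the right-hand witness literal to reappear along the cycle. I would then argue by induction on the formula that each literal of $O$ is used as a label on some edge of $scc$, so $O \subseteq L(scc)$, which is exactly the required witness.
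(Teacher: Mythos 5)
There is a genuine gap, and it sits in your ``if'' direction. Note first that this paper does not prove Theorem \ref{thm:scc} at all---it is recalled from \cite{LZPVH13}---but Section \ref{sec:related} explicitly records the caveat your argument collides with: the theorem is only correct for \emph{tagged} input formulas, and the tagging machinery is precisely what your proof never invokes. Your step ``by Theorem \ref{thm:oa}, the word $O^\omega$ satisfies $\psi$'' is already illegitimate: Theorem \ref{thm:oa} requires $O$ to be a \emph{consistent} obligation, whereas the entire point of Theorem \ref{thm:scc} is to handle inconsistent obligations whose literals are spread over different positions of a cycle. For $\phi = GFa \wedge GF\neg a$ the only obligation is $\{a,\neg a\}$, which the theorem must certify via $L(scc)=\{a,\neg a\}$; there $O^\omega$ is not even a consistent trace, so your appeal to Theorem \ref{thm:oa} collapses exactly in the cases the theorem was designed to add.

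More fundamentally, your lasso-stitching claim---that an infinite path through $scc$ whose edge labels jointly cover $O$ must satisfy $\psi$---is unsound for untagged formulas, and the paper itself supplies the counterexample: for $\phi=(a\vee b)\ U\ (Ga)$ the state $\phi$ has self-loops labeled $a$ and $b$, so $L(scc)=\{a,b\}\supseteq\{a\}\in\olg{\phi}$, yet the path alternating $a$ and $b$ does \emph{not} satisfy $\phi$, since the eventuality $Ga$ is never fulfilled. Coverage of $O$ by the labels of a cycle simply does not imply satisfaction along that cycle; your backward-propagation lemma for single transitions is correct but only transports satisfaction along the finite prefix---satisfaction of the infinite loop is the whole content of the theorem, and it is established in \cite{LZPVH13} by tagging each atom with the until-subformulas whose right-hand sides it witnesses, so that containment of the (tagged) obligation in $L(scc)$ genuinely certifies fulfillment of every eventuality. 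Your ``only if'' direction has the right shape (run construction consistent with the model, pigeonhole into an SCC, obligation extraction), though its final step---arguing each extracted literal labels an edge of $scc$---is also only sketched and likewise leans on the tagging discipline; but the decisive failure is the ``if'' direction, which as written proves nothing beyond the consistent-obligation case already covered by Theorem \ref{thm:oa}.
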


\begin{example}
  \begin{enumerate}
    \item Consider the formula $\phi=a U b \wedge c U d$: Since $\olg{\phi}=\{\{b, d\}\}$  in 
    which $\{b,d\}$ is 
    obviously a consistent obligation, so $\phi$ is also satisfiable from Theorem \ref{thm:oa}.
    \item Consider the formula $\phi=F a \wedge G\neg a$: since 
    $\olg{\phi}=\{\{a,\neg a\}\}$ which does not contain any consistent obligation, so Theorem 
    \ref{thm:oa} is not available. Actually, $T_\phi$ contains only one state $\phi$ with a 
    self-loop labeling $\neg a$: Thus we cannot find a $scc$ satisfying Theorem \ref{thm:scc}, 
    which implies $\phi$ is unsatisfiable. 
  \end{enumerate}
\end{example}

\section{Satisfiability Checking with Obligation Formula}\label{sec:of}

\subsection{Obligation Formula}

The \textit{Obligation}-based satisfiability checking has been proven more efficiently than 
traditional model-checking-based approach \cite{LZPVH13}. However the size of obligation set 
can be exponential in the number of conjuncts. 
For example, consider the pattern formula 
$\bigwedge_{1\leq i\leq n} (G a_i\vee F b_i)$, which obviously is satisfiable. 
By applying our previous approach, the extra exponential cost must be paid to compute the 
whole \textit{obligation set}. We may view the \textit{obligation set} 
as a DNF, with  each element in \textit{obligation set} a clause in DNF. 
It hints that we can replace the \textit{obligation set} by an \textit{obligation formula}.

\begin{definition}[Obligation Formula]\label{def:of}
Given an LTL formula $\phi$, the corresponding \textit{obligation formula}, 
which is denoted as $\of{\phi}$, is defined recursively as follows:
     \begin{itemize}
       \item $\of{\tt}=\tt$ and $\of{\ff}=\ff$;
       \item If $\phi = p$ where $p$ is a literal, then $\of{\phi} = p$;
       \item If $\phi = X\psi$, then $\of{\phi} = \of{\psi}$;
       \item If $\phi = \phi_1 U \phi_2$ or $\phi = \phi_1 R \phi_2$, then $\of{\phi} = \of{\phi_2}$;
       \item If $\phi = \phi_1 \wedge \phi_2$, then $\of{\phi} = \of{\phi_1} \wedge \of{\phi_2}$;
       \item If $\phi = \phi_1 \vee \phi_2$, then $\of{\phi} = \of{\phi_1} \vee \of{\phi_2}$;
     \end{itemize}
\end{definition}

The \textit{obligation formula} is virtually a Boolean formula.
Compared to the definition of \textit{obligation set} (Definition
\ref{def:os}), the \textit{obligation formulas} avoid the generation
of DNF, and thus avoid the extra exponential cost. It succeeds to
reduce the computation of \textit{obligation set} to the checking on
the \textit{obligation formula}.

The following lemma explains the relationship between the \textit{obligation formula} and \textit{obligation set}:

\begin{lemma}\label{lemma:ofandos}
  Given an LTL formula $\phi$, then
  $\of{\phi}\equiv\bigvee_{O\in\olg{\phi}}\bigwedge_{l\in O} l$,
  i.e. the DNF of $\of{\phi}$ is
  $\olg{\phi}$.
\end{lemma}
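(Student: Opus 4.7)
The plan is to prove the equivalence by structural induction on $\phi$, following the two mutually recursive definitions of $\of{\phi}$ (Definition \ref{def:of}) and $\olg{\phi}$ (Definition \ref{def:os}) case by case. Since the two definitions are driven by exactly the same clause structure, the induction should line up neatly.

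For the base cases, I would verify the equivalence directly. For $\phi = \tt$, the right-hand side becomes $\bigvee_{O \in \{\emptyset\}} \bigwedge_{l \in \emptyset} l$, i.e.\ the empty conjunction over a single obligation, which is $\tt$, matching $\of{\tt} = \tt$. For $\phi = \ff$, the obligation set is $\{\{\ff\}\}$, which yields $\ff$ on the right, matching $\of{\ff} = \ff$. For a literal $p$, both sides reduce to $p$.

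For the inductive step, the cases $\phi = X\psi$ and $\phi = \phi_1 U \phi_2$ (resp.\ $\phi_1 R \phi_2$) are immediate, since both $\of{}$ and $\olg{}$ forward the question to the same subformula (to $\psi$ in the $X$ case, to $\phi_2$ in the $U/R$ cases), so the induction hypothesis applies directly. The disjunctive case is also easy: by definition $\of{\phi_1 \vee \phi_2} = \of{\phi_1} \vee \of{\phi_2}$ and $\olg{\phi_1 \vee \phi_2} = \olg{\phi_1} \cup \olg{\phi_2}$, and the induction hypothesis lets me merge two disjunctive normal forms into one indexed over the union.

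The only case that involves any actual manipulation is the conjunction. Here $\of{\phi_1 \wedge \phi_2} = \of{\phi_1} \wedge \of{\phi_2}$, while $\olg{\phi_1 \wedge \phi_2} = \{O_1 \cup O_2 \mid O_i \in \olg{\phi_i}\}$. Plugging in the induction hypothesis, I need the distributive law
\[
\Bigl(\bigvee_{O_1 \in \olg{\phi_1}} \bigwedge_{l \in O_1} l\Bigr) \wedge \Bigl(\bigvee_{O_2 \in \olg{\phi_2}} \bigwedge_{l \in O_2} l\Bigr) \equiv \bigvee_{\substack{O_1 \in \olg{\phi_1}\\ O_2 \in \olg{\phi_2}}} \bigwedge_{l \in O_1 \cup O_2} l,
\]
which is pure propositional reasoning once I observe that $\bigwedge_{l \in O_1} l \wedge \bigwedge_{l \in O_2} l \equiv \bigwedge_{l \in O_1 \cup O_2} l$. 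This is the only step requiring more than unpacking definitions, and even it is standard — so I do not anticipate a genuine obstacle, only careful bookkeeping. The overall proof should be short and mechanical.
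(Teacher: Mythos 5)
Your proposal is correct and follows essentially the same route as the paper's own proof: a structural induction over $\phi$ mirroring the parallel clauses of Definitions \ref{def:of} and \ref{def:os}, with the conjunction case handled by the same distributivity argument (the paper's case 6). The only cosmetic difference is that you treat the base cases $\tt$, $\ff$ explicitly where the paper calls them easy, which is fine.
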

\begin{proof}
  We can prove this lemma by structural induction over $\phi$:
  \begin{enumerate}
    \item If $\phi=\tt$ or $\phi=\ff$, one can prove easily the lemma holds;
    \item If $\phi=p$ is a literal, then we know $\of{\phi}=p$ and $\olg{\phi}=\{\{p\}\}$. Thus 
    $\of{\phi}\equiv\bigvee_{O\in\olg{\phi}}\bigwedge_{l\in O} l$ is true;
    \item If $\phi=X\psi$, then we know $\of{\phi}=\of{\psi}$ and $\olg{\phi}=\olg{\psi}$. By induction 
    hypothesis we have $\of{\psi}\equiv \bigvee_{O\in\olg{\psi}}\bigwedge_{l\in O} l$ holds. So 
    it is also true that $\of{\phi}\equiv \bigvee_{O\in\olg{\phi}}\bigwedge_{l\in O} l$;
    \item If $\phi=\phi_1U\phi_2$ or $\phi=\phi_1R\phi_2$, then we know $\of{\phi}=\of{\phi_2}$ and 
    $\olg{\phi}=\olg{\phi_2}$. By induction hypothesis we have $\of{\phi_2}\equiv \bigvee_{O\in\olg{\phi_2}}\bigwedge_{l\in O} l$ holds. So 
    it is also true that $\of{\phi}\equiv \bigvee_{O\in\olg{\phi}}\bigwedge_{l\in O} l$;
    \item If $\phi=\phi_1\vee\phi_2$, then we know $\of{\phi}=\of{\phi_1}\vee\of{\phi_2}$ and 
    $\olg{\phi}=\olg{\phi_1}\cup \olg{\phi_2}$. By induction hypothesis we have $\of{\phi_i}\equiv 
    \bigvee_{O_i\in\olg{\phi_i}}\bigwedge_{l\in O_i}l$ holds, where $i=1,2$. Then it is true that 
    $\of{\phi}\equiv 
    \of{\phi_1}\vee\of{\phi_2}\equiv\bigvee_{O_1\in\olg{\phi_1}}\bigwedge_{l\in O_1}l \vee \bigvee_{O_2\in\olg{\phi_2}}\bigwedge_{l\in O_2}l \equiv \bigvee_{O\in\olg{\phi}}\bigwedge_{l\in O}l$;
    \item If $\phi=\phi_1\wedge\phi_2$, then we know $\of{\phi}=\of{\phi_1}\wedge\of{\phi_2}$ and 
    $\olg{\phi}=\{O_1\cup O_2| O_i\in\olg{\phi_i}(i=1,2)\}$.  By induction hypothesis we have $\of{\phi_i}\equiv 
    \bigvee_{O_i\in\olg{\phi_i}}\bigwedge_{l\in O_i}l$ holds, where $i=1,2$. Then it is true that 
    $\of{\phi}\equiv 
    \of{\phi_1}\wedge\of{\phi_2}\equiv\bigvee_{O_1\in\olg{\phi_1}}\bigwedge_{l_1\in O_1}l_1 \wedge \bigvee_{O_2\in\olg{\phi_2}}\bigwedge_{l_2\in O_2}l_2 \equiv \bigvee_{O_1\in\olg{\phi_1}}\bigvee_{O_2\in\olg{\phi_2}}\bigwedge_{l_1\in O_1}\bigwedge_{l_2\in O_2}(l_1\wedge l_2)\equiv\\\bigvee_{O_1\cup O_2\in \olg{\phi}}\bigwedge_{l_1\wedge l_2\in O_1\cup O_2}(l_1\wedge l_2)\equiv \\\bigvee_{O\in\olg{\phi}}\bigwedge_{l\in O}l$. The proof is done.
  \end{enumerate}
\end{proof}

\iffalse
The correctness of the lemma is directly guaranteed by the definitions of \textit{obligation formula} (Definition \ref{def:of}) and \textit{obligation set} (Definition \ref{def:os}).
\fi

\subsection{Obligation-based Satisfiability Checking Revisited}

In this section, we adapt our general checking theorem (Theorem
\ref{thm:scc}) via reducing checking the containment of an obligation
to the satisfiability of the corresponding obligation formula. Lemma
\ref{lem:reduction_2} below shows the reduction first and Theorem
\ref{thm:satscc} tells how to achieve the general checking via the
obligation formula. Before that, we introduce the $\models_w$ (weak satisfaction relation) 
operator appeared in the theorem.

Let $S \subseteq L$ be a set of literals of $L_\phi$, and $\alpha$ a
propositional formula in NNF. We define $S\models_w \alpha$ in
a syntactic way: if $\alpha$ is a literal, $\tt$ or $\ff$ then $S\models_w \alpha$
iff $\alpha\in S$, $S\models_w \alpha_1\wedge\alpha_2$ iff
$S\models_w\alpha_1$ and $S\models_w\alpha_2$, and
$S\models_w \alpha_1\vee\alpha_2$ iff $S\models_w\alpha_1$ or
$S\models_w\alpha_2$. Note $S$ needs not to be
consistent, e.g., $\{a,\neg a\}\models_w a\wedge \neg a$ holds according to the definition.

\begin{lemma}\label{lem:reduction_2}
  Given an LTL formula $\phi$ and a literal set $S$, then $S\models_w \of{\phi}$ iff there exists 
  an obligation $O\in\olg{\phi}$ such that $O\subseteq S$.
\end{lemma}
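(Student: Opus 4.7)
The plan is to prove the biconditional by structural induction on $\phi$, which works smoothly here because the recursive clauses defining $\of{\phi}$ (Definition \ref{def:of}) and $\olg{\phi}$ (Definition \ref{def:os}) are set up in lock-step. Although Lemma \ref{lemma:ofandos} is tempting to invoke directly, that lemma is stated up to semantic equivalence $\equiv$, whereas $\models_w$ is defined syntactically on a literal set $S$ that need not be consistent; so it is cleaner to argue by induction rather than to chase a semantic detour.

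First I would dispose of the base cases. For $\phi=\tt$ we have $\of{\tt}=\tt$ and $\olg{\tt}=\{\emptyset\}$, and both $S\models_w \tt$ and $\emptyset\subseteq S$ hold trivially. For $\phi=\ff$, $\of{\ff}=\ff$ and $\olg{\ff}=\{\{\ff\}\}$, and both sides are false (the only candidate obligation $\{\ff\}$ is not a subset of $S$ because $\ff\notin L$, and $S\models_w \ff$ fails by definition). For a literal $p$, $\of{\phi}=p$ and $\olg{\phi}=\{\{p\}\}$, so $S\models_w p$ iff $p\in S$ iff $\{p\}\subseteq S$. The ``transparent'' cases $\phi=X\psi$, $\phi=\phi_1 U\phi_2$, and $\phi=\phi_1 R\phi_2$ are immediate from the induction hypothesis, since both $\of{\phi}$ and $\olg{\phi}$ are inherited unchanged from $\psi$ or $\phi_2$.

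The two interesting cases are the Boolean connectives. For disjunction, I would unfold $S\models_w \of{\phi_1}\vee\of{\phi_2}$ into ``$S\models_w \of{\phi_1}$ or $S\models_w \of{\phi_2}$'', apply the induction hypothesis to each disjunct to obtain some $O\in\olg{\phi_i}$ with $O\subseteq S$, and then invoke $\olg{\phi}=\olg{\phi_1}\cup\olg{\phi_2}$; the reverse direction is symmetric. For conjunction, the key step is the elementary set-theoretic fact that $O_1\cup O_2\subseteq S$ iff $O_1\subseteq S$ and $O_2\subseteq S$. Combined with the definition $\olg{\phi}=\{O_1\cup O_2\mid O_i\in\olg{\phi_i}\}$ and the syntactic clause $S\models_w \of{\phi_1}\wedge\of{\phi_2}$ iff both conjuncts hold of $S$, this yields the equivalence via the induction hypothesis on each $\phi_i$.

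The main obstacle, modest as it is, lies in the conjunctive case: I must verify both directions carefully, making sure that every $O\in\olg{\phi_1\wedge\phi_2}$ actually decomposes as $O=O_1\cup O_2$ with $O_i\in\olg{\phi_i}$ (so that the ``$\Leftarrow$'' direction can feed each $O_i\subseteq S$ back into the induction hypothesis), and conversely that recombining witnesses $O_1,O_2$ from the two sub-inductions produces a legitimate member of $\olg{\phi}$. Both points are immediate from Definition \ref{def:os}, so once the induction scaffolding is in place the proof is essentially a rewriting exercise.
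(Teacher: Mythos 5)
Your proof is correct, but it takes a genuinely different route from the paper's. The paper dispatches Lemma \ref{lem:reduction_2} in three lines by citing Lemma \ref{lemma:ofandos}: it identifies each obligation $O\in\olg{\phi}$ with a clause of the DNF of $\of{\phi}$ and then asserts that $S\models_w\of{\phi}$ holds iff some DNF clause $cl$ satisfies $cl\subseteq S$. You instead run a self-contained structural induction tracking the lock-step recursions of Definitions \ref{def:os} and \ref{def:of}, and your stated reason for avoiding Lemma \ref{lemma:ofandos} is a genuine observation: that lemma is recorded only up to semantic equivalence $\equiv$, while $\models_w$ is a syntactic relation that is not invariant under $\equiv$ (for instance $\{a,\neg a\}\models_w a\wedge\neg a$ even though $a\wedge\neg a\equiv\ff$, and $S\models_w\ff$ never holds). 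The paper's argument survives because what its proof of Lemma \ref{lemma:ofandos} actually establishes is syntactic --- the DNF of $\of{\phi}$ has exactly the obligations as its clauses, and DNF conversion uses only distributivity, which $\models_w$ does respect when literals are treated as independent atoms --- but the written statement of that lemma does not say so, and the paper's ``it is obvious'' step quietly relies on it. So the paper's route buys brevity and reuse; yours buys rigor at precisely the point where the short proof is fragile, at the modest cost of redoing the case analysis. One quibble in your base case: under the paper's literal definition of $\models_w$, $S\models_w\tt$ would require $\tt\in S$ and hence never hold for a literal set $S$; you silently adopt the clearly intended reading that $\tt$ is always weakly satisfied, which is in any case forced by the lemma itself at $\phi=\tt$, since $\emptyset\in\olg{\tt}$ and $\emptyset\subseteq S$ always hold.
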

\begin{proof}
   According to Lemma \ref{lemma:ofandos}, $\olg{\phi}$ is semantically equivalent to the DNF of $\of{\phi}$. 
   Then from Definition \ref{def:os} we know an obligation in $\olg{\phi}$ is essentially a clause of the DNF of $\of{\phi}$. 
   And it is obvious that $S\models_w \of{\phi}$ iff there is a clause $cl$ in the DNF of $\of{\phi}$ which 
   satisfies $S\models_w \bigwedge cl$, i.e., $cl\subseteq S$. Let $O=cl$
   and we know $O$ is an obligation in $\olg{\phi}$. The proof is done.
   
\end{proof}

%By applying the Lemma \ref{lem:reduction_2}, Theorem \ref{thm:os} now is improved as follows:

\begin{theorem}[SAT-Based Generalized Satisfiability Checking]\label{thm:satscc}
  The LTL formula $\phi$ is satisfiable iff there exists a SCC scc and a state $\psi\in scc$ 
  in $T_\phi$ such that $L(scc)\models_w \of{\psi}$.
\end{theorem}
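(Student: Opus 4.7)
The plan is to obtain this theorem as an almost immediate corollary of the previously established Theorem \ref{thm:scc} (obligation-set characterization of satisfiability) together with Lemma \ref{lem:reduction_2} (the syntactic bridge between weak satisfaction of $\of{\psi}$ and containment of an obligation from $\olg{\psi}$). The strategy is simply to rewrite the ``containment of some $O\in\olg{\psi}$'' side condition of Theorem \ref{thm:scc} as ``$L(scc)\models_w\of{\psi}$'' via the lemma; no further induction or automaton-theoretic argument should be needed.

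Concretely, for the forward direction I would assume $\phi$ is satisfiable and apply Theorem \ref{thm:scc} to obtain an SCC $scc$ of $T_\phi$, a state $\psi\in scc$, and an obligation $O\in\olg{\psi}$ with $O\subseteq L(scc)$. Setting $S:=L(scc)$, the right-to-left direction of Lemma \ref{lem:reduction_2} applied to $\psi$ immediately yields $L(scc)\models_w\of{\psi}$. For the converse, given an SCC $scc$ and a state $\psi\in scc$ with $L(scc)\models_w\of{\psi}$, the left-to-right direction of Lemma \ref{lem:reduction_2} produces some $O\in\olg{\psi}$ with $O\subseteq L(scc)$, and then Theorem \ref{thm:scc} directly yields the satisfiability of $\phi$.

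The only mild thing to check, which I do not consider a real obstacle, is that $\of{\psi}$ is a legitimate argument to $\models_w$: by inspecting Definition \ref{def:of} one sees that $\of{\psi}$ is built solely from $\tt$, $\ff$, literals, $\wedge$, and $\vee$, so it is a propositional formula in NNF over $L_\phi$, which is exactly the form required by the definition of $\models_w$. Everything else is a direct substitution, so the proof should be short — essentially a one-line chain of equivalences citing the two earlier results.
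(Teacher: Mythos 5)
Your proposal is correct and matches the paper's own proof exactly: the paper likewise invokes Lemma \ref{lem:reduction_2} with $S = L(scc)$ to rewrite the obligation-containment condition of Theorem \ref{thm:scc} as $L(scc)\models_w \of{\psi}$, and concludes in both directions with no further argument. Your added observation that $\of{\psi}$ is a propositional NNF formula over $L_\phi$, hence a legitimate argument to $\models_w$, is a harmless extra sanity check that the paper leaves implicit.
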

\begin{proof}
  First according to Lemma \ref{lem:reduction_2} we know $L(scc)\models_w \of{\psi}$ holds iff 
  there exists an obligation $O\in\olg{\phi}$ such that $O\subseteq L(scc)$. Then from  
  Theorem \ref{thm:scc} we can directly conclude this theorem.
\end{proof}

In Theorem \ref{thm:satscc} the set $L(scc)$ collects all literals along $scc$, thus it may 
be inconsistent. So it is necessary to introduce the notation $\models_w$.
  
\section{Satisfiability Checking Acceleration}\label{sec:ofa}

In this section we present accelerating techniques exploiting
obligation formulas that are tailored to both satisfiable and
unsatisfiable formulas.

\subsection{Acceleration on Satisfiable formulas }
Recall that we need to find a consistent obligation in $\olg{\phi}$ in Theorem \ref{thm:oa}. 
Now the problem can be reduced to that of  
checking whether $\of{\phi}$ is satisfiable. The following lemma shows that if $\of{\phi}$ is 
satisfiable then there exists a consistent obligation in $\olg{\phi}$. 

\begin{lemma}\label{lem:reduction}
  For an LTL formula $\phi$, if $\of{\phi}$ is satisfiable, then there exists a consistent 
  obligation $O\in \olg{\phi}$. 
\end{lemma}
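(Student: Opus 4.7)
The plan is to reduce this statement to an immediate corollary of Lemma \ref{lemma:ofandos}, which already establishes that the obligation formula $\of{\phi}$ is logically equivalent to the DNF built out of the obligation set, namely $\of{\phi}\equiv\bigvee_{O\in\olg{\phi}}\bigwedge_{l\in O} l$. Thus all the real work of translating between the two representations has already been done, and what remains is a short propositional argument.

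First I would invoke this equivalence and argue that, since $\of{\phi}$ is satisfiable, so is the equivalent DNF on the right-hand side. By basic propositional reasoning, a satisfying truth assignment $\pi$ for a DNF must make at least one disjunct true. Hence there exists some obligation $O\in\olg{\phi}$ such that $\pi\models\bigwedge_{l\in O}l$.

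Second, I would observe that the very existence of such an assignment $\pi$ witnesses that $\bigwedge_{l\in O}l\not\equiv\ff$. By Definition \ref{def:os}, this is precisely the condition for $O$ to be a consistent obligation of $\phi$. Taking this $O$ as the witness completes the proof.

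The only place where one has to be careful is in handling the tagging used in the framework: syntactically distinct copies such as $a_1,a_2$ may be semantically equal, and likewise $a_2\wedge\neg a_4\equiv\ff$ can occur even though the two literals look independent. However, consistency of an obligation is defined semantically (through $\bigwedge a\not\equiv\ff$), and satisfiability of $\of{\phi}$ is likewise a semantic notion, so the propositional assignment argument respects these identifications automatically. I do not expect this to be a real obstacle, but it is the one subtlety that deserves an explicit line in the write-up.
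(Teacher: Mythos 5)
Your proof is correct and follows essentially the same route as the paper's: both invoke Lemma~\ref{lemma:ofandos} to identify $\of{\phi}$ with the DNF over $\olg{\phi}$, then note that a satisfying assignment must satisfy some disjunct, whose corresponding obligation is therefore consistent. Your added remark on tagging is a harmless extra precaution beyond what the paper's proof records.
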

\begin{proof}
  According to Lemma \ref{lemma:ofandos}, $\olg{\phi}$ is semantically
  equivalent to the DNF of $\of{\phi}$. So every
  obligation in $\olg{\phi}$ is actually a clause in the DNF of
  $\of{\phi}$. And it is apparently true that $\of{\phi}$ is
  satisfiable implies there exists a clause $cl$ in the DNF of
  $\of{\phi}$ which is satisfiable. Thus $cl$ is consistent,
  i.e. $\bigwedge cl\not=\ff$. Let $O=cl$ and we know that $O$ is an
  consistent obligation in $\olg{\phi}$. The proof is done.
  
\end{proof}

From Lemma \ref{lem:reduction}, Theorem \ref{thm:oa} can be slightly adapted to obtain 
our SAT-based obligation acceleration for satisfiable formulas:

\begin{theorem}[SAT-Based Obligation Acceleration]\label{thm:satoa}
  For an LTL formula $\phi$, if $\of{\phi}$ is satisfiable, then $\phi$ is also satisfiable. 
\end{theorem}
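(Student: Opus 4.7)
The plan is to chain together the two results already established immediately before the theorem, so the proof will be extremely short and largely a bookkeeping exercise. The suggested structure is: assume $\of{\phi}$ is satisfiable; invoke Lemma \ref{lem:reduction} to extract a consistent obligation $O\in\olg{\phi}$; then hand that $O$ to Theorem \ref{thm:oa} (Obligation Acceleration) to conclude that $O^{\omega}\models\phi$; finally observe that the existence of the infinite witnessing trace $O^{\omega}$ is exactly the definition of satisfiability for $\phi$.

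Concretely, I would first state the hypothesis that there is a truth assignment making $\of{\phi}$ true and then explicitly name the consistent obligation $O$ that Lemma \ref{lem:reduction} hands back. Next I would note that $O$, being a consistent literal set by the definition of consistency recalled in Definition \ref{def:os}, admits $\bigwedge_{a\in O} a\not\equiv\ff$, so $O$ viewed as an element of $2^{L}$ is a legitimate letter of $\Sigma$. Hence the sequence $O^{\omega}=O\,O\,O\cdots\in\Sigma^{\omega}$ is a well-defined infinite trace. Theorem \ref{thm:oa} then immediately gives $O^{\omega}\models\phi$, and by the definition of satisfiability we conclude that $\phi$ is satisfiable.

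There is essentially no obstacle here: the two lemmas do all the real work, with Lemma \ref{lem:reduction} bridging the propositional-SAT world and the obligation-set world, and Theorem \ref{thm:oa} bridging the obligation-set world and the LTL-semantics world. The only thing to be careful about is not to confuse propositional satisfiability of $\of{\phi}$ (assignments to literal symbols) with LTL satisfiability of $\phi$ (existence of an infinite trace), and the intermediate object $O$ is precisely what reconciles the two notions, since it is both a satisfying clause of (the DNF equivalent of) $\of{\phi}$ and a single letter that can be repeated forever to model $\phi$.
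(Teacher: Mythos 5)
Your proposal is correct and matches the paper's own (elided) proof exactly: the paper likewise chains Lemma \ref{lem:reduction} to obtain a consistent obligation $O\in\olg{\phi}$ and then applies Theorem \ref{thm:oa} to conclude that $\phi$ is satisfiable. Your additional remark that $O$ is a legitimate letter of $\Sigma$, so that $O^{\omega}$ is a well-defined witnessing trace, is a harmless elaboration of the same two-step argument.
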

\iffalse
\begin{proof}
First from Lemma \ref{lem:reduction} we know $\of{\phi}$ is satisfiable implies there exists a 
consistent obligation $O$ in $\olg{\phi}$. Then according to Theorem \ref{thm:oa} we can conclude 
that $\phi$ is satisfiable. 
\end{proof}
\fi

\subsection{Acceleration on Unsatisfiable formulas}
The previous section proposes a heuristic for checking satisfiability
of obligation formulas. In this section we further
exploit SAT solvers to develop heuristics for checking unsatisfiable
formulas by using the obligation formulas. We first use an example to
explain our idea. Consider the formula $\phi=G a \wedge X\neg a$.  One
can see that $\phi$ is unsatisfiable. If we look into the formula, $a$
must be true in every position from the beginning (position $0$) in
$Ga$, on the other side, $a$ must be false in the position $1$ due to
$X\neg a$: this is obviously a contradiction.  Now recall our approach: $\of{\phi} =
a\wedge\neg a$ is unsatisfiable, so Theorem \ref{thm:satoa} cannot
apply. The observation we get here is that there is no
positional information for literals in $\of{\phi}$ so that we lost the
information that $a$ and $\neg a$ must both be true in position $1$.

For this purpose, we extend the \textit{obligation formula} for a
formula $\phi$, denoted as $\ofp{\phi}$, with additional positional
information for each literal. Besides the literal itself, the start
position and its duration are also recorded in $\ofp{\phi}$. We denote
the alphabet of $\ofp{\phi}$ as $\mathcal{L}=L_\phi\times
\mathbb{N}\cup\{\nondeter\}\times\{\cur,\inf,\geq\}$, where
each $l\in
\mathcal{L}$ consists of three elements:
\begin{itemize}
\item the propositional property ($L$),
\item start position ($\mathbb{N}\cup\{\nondeter\}$) from which the property must be
  satisfied. The symbol $\nondeter$ means the start position is not
  determined.
\item its duration ($\{\cur,\inf,\geq\}$) where $\cur$ means the duration
  is just the start position, $\geq$ means the duration is all from
  the start position; and $\inf$ means the duration is
  infinitely many from the start position, but not all. 
\end{itemize}

For convenience in the following, we use the notations $l.prop$,
$l.start$, and $l.duration$ to represent its corresponding first,
second and third elements for $l\in \mathcal{L}$. So, if $l=\langle p,
0, \geq\rangle$, then $l.prop=p$, $l.start=0$ and
$l.duration=\geq$. We also use the notation $[\ofp{\phi}]$ ($[\of{\phi}]$) to represent 
the set of literals appearing in $\ofp{\phi}$ ($\of{\phi}$).  Now we give the
formal definition of $\ofp{\phi}$:

\begin{definition}[Obligation Formula with Position]\label{def:ofp}
    Given an LTL formula $\phi$, the corresponding \textit{obligation formula with position}, denoted 
    as $\ofp{\phi}$, is defined recursively as follows:
    \begin{itemize}
      \item If $\phi=p$: $\ofp{\phi} = \langle p, 0, \cur\rangle$;
      \item If $\phi = \phi_1 \wedge \phi_2$: $\ofp{\phi} = \ofp{\phi_1} \wedge \ofp{\phi_2}$;
      \item If $\phi = \phi_1 \vee \phi_2$: 
      \begin{itemize}
        \item if for every 
      $l_1, l_2\in [\ofp{\phi_1}]\cup [\ofp{\phi_2}]$ it holds that $l_1.start=l_2.start$, then 
      $\ofp{\phi} = \ofp{\phi_1}\vee \ofp{\phi_2}$; 
    \item Otherwise $\ofp{\phi}=\ofp{\phi_1}'\vee \ofp{\phi_2}'$,
      where $\ofp{\phi_i}' (i=1,2)$ is acquired from $\ofp{\phi_i}$ by setting
      $l.start = \nondeter$ and $l.duration=\cur$ for every $l\in
      [\ofp{\phi_i}]$;
      \end{itemize}
      \item If $\phi = X\psi$: $\ofp{\phi} = Pos(\ofp{\psi}, X)$; 
      \item If $\phi = \phi_1 U \phi_2$: $\ofp{\phi} = Pos(\ofp{\phi_2}, U)$;
      \item If $\phi = \phi_1 R \phi_2$ : $\ofp{\phi} = Pos(\ofp{\phi_2}, R)$;       
      \item If $\phi = G\psi$: $\ofp{\phi} = Pos(\ofp{\psi}, G)$;
    \end{itemize}
    where the function $Pos(\ofp{\psi}, type)$ updates $\ofp{\psi}$
    via the $type$.  Explicit rules are listed in
    Table \ref{tab:pos}.
  
\begin{table}
\caption{The explicit rules for the $Pos$ function}\label{tab:pos}
\centering
\scalebox{1}{
    \begin{tabular}{|c|c|c|c|c|}
    \hline
      Literal & X & U & R & G \\
      \hline
      $\langle p, i, \cur\rangle$ & $\langle p, i+1, \cur\rangle$ & $\langle p, \nondeter, \cur\rangle$ & $\langle p, i, \cur\rangle$ & $\langle p, i, \geq\rangle$\\
      \hline
      $\langle p, \nondeter, \cur\rangle$ & $\langle p, \nondeter, \cur\rangle$ & $\langle p, \nondeter, \cur\rangle$ & $\langle p, \nondeter, \cur\rangle$ & $\langle p, \nondeter, \inf\rangle$\\
      \hline
      $\langle p, i, \ge\rangle$ & $\langle p, i+1, \ge\rangle$ & $\langle p, \nondeter, \ge\rangle$ & $\langle p, i, \geq\rangle$ & $\langle p, i, \geq\rangle$\\
      \hline
      $\langle p, \nondeter, \geq\rangle$ & $\langle p, \nondeter, \geq\rangle$ & $\langle p, \nondeter, \geq\rangle$ & $\langle p, \nondeter, \geq\rangle$ & $\langle p, \nondeter, \geq\rangle$\\
      \hline
      $\langle p, i, \inf\rangle$ & $\langle p, i+1, \inf\rangle$ & $\langle p, \nondeter, \inf\rangle$ & $\langle p, i, \inf\rangle$ & $\langle p, i, \inf\rangle$\\
      \hline
      $\langle p, \nondeter, \inf\rangle$ & $\langle p, \nondeter, \inf\rangle$ & $\langle p, 
\nondeter, \inf\rangle$ & $\langle p, 
\nondeter, \inf\rangle$ & $\langle p, 
\nondeter, \inf\rangle$\\
    \hline
    \end{tabular}
}
    
\end{table}
\end{definition}

The $\vee$ operator is a key which causes nondeterminism. So every start position and duration in 
literals should be updated to $
\nondeter$ and $\cur$ respectively -- unless we make sure all literals' start 
positions are the same.
The first column of Table \ref{tab:pos} shows all possible compositions for literals. The second to 
fifth columns show the new composition after the corresponding temporal operator acting on the 
literal. The $X$ operator only add 1 to the start position if it is determined, and the $R$ operator 
does not change the original information at all. For the $U$ operator it makes every start 
position undetermined. The $G$ operator is distinguished with $R$ as it causes the duration. If its 
nested literal $l$ satisfies $l.start=
\nondeter$ or $l.duration=\inf$, then it will update 
$l.duration=\inf$; otherwise it updates $l.duration=\geq$.

It should be mentioned that $\ofp{\phi}$ is essentially an extended propositional formula whose alphabet is 
$\mathcal{L}=L_\phi\times\mathbb{N}\cup\{\nondeter\}\times\{\cur,\inf,\geq\}$. Definition 
\ref{def:ofp} only involves in the syntactic level of $\ofp{\phi}$, and its semantics is skipped as 
we treat $\ofp{\phi}$ an intermediate structure but actually set up the decision procedure on the \textit{positional projection formulas} 
created from $\ofp{\phi}$. The definition is shown below. 

So far we have encoded the positional information into the literals and \textit{obligation formulas}. 
The following definition provides us a mechanism to project the \textit{obligation formula} into each position we 
concern. We try to make the projection loose enough to guarantee the correctness: In the definitions, if 
the literal is not determined in the projecting position, then we just assign 
its projection to be $\tt$.

\iffalse
\begin{definition}[Positional Projection on Literals]\label{def:literal_projection}
    Let $l\in \mathcal{L}$, then we denote $l\!\downarrow_i$ as 
    $l$'s projection under the position $i$, and we define $l\!\downarrow_i\equiv\tt$ iff 
    1). $l.start=i$, or 2). $l.start < i$ and $l.duration = \ge$. Otherwise we define $l\!\downarrow_i\equiv\ff$
\end{definition}

The projection on a literal tells us whether the literal needs to be satisfied at the given position. For example, 
consider the formula $\phi=GX(a\wedge bUc)$ and thus $\ofp{\phi}=\langle a, 1, \geq\rangle \wedge\langle c, 
\nondeter, \geq\rangle$. 
Let $l_1=\langle a, 1, \geq\rangle$ and $l_2=\langle c, 
\nondeter, \geq\rangle$, then according to 
Definition \ref{def:literal_projection} we have $l_1\!\downarrow_0\equiv\ff$, 
$l_1\!\downarrow_1\equiv\tt$ (since $l_1.start=1$) and $l_1\!\downarrow_i\equiv\tt$ for every $i> 1$ (since $l_1.start<i$ and 
$l_1.duration=\geq$). Note also $l_2\!\downarrow_i\equiv\ff$ for all $i\geq 0$, and it is because $l_2.start=
\nondeter$ which is 
undetermined so that its projection for every position is $\ff$. Next, we extend the projection to the obligation formula:
\fi
  
\begin{definition}[Positional Projection on Obligation Formulas]\label{def:ofp_projection}
    Given an obligation formula with positions $\ofp{\phi}$ from $\phi$, its projection under the 
    position $i$, denoted 
    as $\ofp{\phi}\!\downarrow_i$, is defined recursively as follows:
    \begin{itemize}
      \item If $\ofp{\phi} = l$: \[
                                 \ofp{\phi}\!\downarrow_i=
                                 \begin{cases}
                                 l.prop & \textit{if } l.start = i \text{, or }\\
                                  & l.start < i \text{ and }l.duration=\geq\text{;}\\
                                 \tt & \text{otherwise.}
                                 \end{cases}
                              \]
      \item If $\ofp{\phi} = t_1\wedge t_2$: $\ofp{\phi}\!\downarrow_i = t_1\!\downarrow_i \wedge t_2\!\downarrow_i$;
      \item If $\ofp{\phi} = t_1\vee t_2$: $\ofp{\phi}\!\downarrow_i = t_1\!\downarrow_i \vee t_2\!\downarrow_i$.
    \end{itemize}
\end{definition}

Informally speaking, $\ofp{\phi}\!\downarrow_i$ keeps the first part of literals whose projection on position $i$ is true. For these $l\in [\ofp{\phi}]$, it is either $l.start = i$ holds or $l.start < i $ and $l.duration=\geq$ hold. Otherwise the literals are substituted  
by $\tt$. So $\ofp{\phi}\!\downarrow_i$ is a pure propositional formula. 

For example, 
consider the formula $\phi=GX(a\wedge bUc)$ and thus $\ofp{\phi}=\langle a, 1, \geq\rangle \wedge\langle c, 
\nondeter, \geq\rangle$. 
Let $l_1=\langle a, 1, \geq\rangle$ and $l_2=\langle c, 
\nondeter, \geq\rangle$ and we start from the literals. According to 
Definition \ref{def:ofp_projection} we have $l_1\!\downarrow_0\equiv\tt$, 
$l_1\!\downarrow_1\equiv a$ (since $l_1.start=1$) and $l_1\!\downarrow_i\equiv a$ for every $i> 1$ (since $l_1.start<i$ and 
$l_1.duration=\geq$). Note also $l_2\!\downarrow_i\equiv\tt$ for all $i\geq 0$, and it is because $l_2.start=
\nondeter$ which is 
undetermined so that its projection for every position is $\tt$. 
Thus, recursively we know that $\ofp{\phi}\!\downarrow_0=\tt\wedge\tt=\tt$, $\ofp{\phi}\!\downarrow_1=a\wedge\tt=a$ and etc.

Now the whole framework has been established, and we can conclude the formula $\phi$ is 
unsatisfiable via finding there is a position which cannot be satisfied in all its models: this is exactly what 
Theorem \ref{thm:satunsat} below talks about. Before that, Lemma \ref{lem:reduction_3}  should be introduced at first, which shows the truth of the reverse of Theorem \ref{thm:satunsat}. In the lemma, the notation $\xi(i)$ represents the $i$th 
element of the infinite trace $\xi$.

\begin{lemma}\label{lem:reduction_3}
    Given an infinite word $\xi$ and an LTL formula $\phi$, if $\xi\models\phi$, then 
    for every position $i\ge 0$ it holds that $\xi(i)\models \ofp{\phi}\!\downarrow_i$.
\end{lemma}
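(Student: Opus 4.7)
The plan is a structural induction on $\phi$, following the recursion of $\ofp{\phi}$ in Definition~\ref{def:ofp} and exploiting that positional projection distributes over $\wedge$ and $\vee$ by Definition~\ref{def:ofp_projection}.

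\textbf{Boolean and easy temporal cases.} For $\phi\in\{\tt,\ff\}$ the claim is immediate. For a literal $\phi=p$, we have $\ofp{p}\!\downarrow_0=p$ and $\ofp{p}\!\downarrow_i=\tt$ for $i\ge 1$, which exactly matches $\xi\models p\Leftrightarrow\xi(0)\models p$. Conjunctions pass through from the IH directly. For disjunctions $\phi=\phi_1\vee\phi_2$ with aligned literal starts, the IH passes through similarly; when the starts do not align, every literal of $\ofp{\phi}$ has start $\nondeter$, so $\ofp{\phi}\!\downarrow_i\equiv\tt$ at every position. For $\phi=\phi_1 U\phi_2$ the $U$-column of Table~\ref{tab:pos} sends every start to $\nondeter$, so $\ofp{\phi}\!\downarrow_i\equiv\tt$ and the case is trivial. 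For $\phi=\phi_1 R\phi_2$ the $R$-column leaves every literal unchanged; the $R$-semantics forces $\xi\models\phi_2$, and the IH applied to $\phi_2$ closes the case since $\ofp{\phi_1 R\phi_2}=\ofp{\phi_2}$.

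\textbf{The $X$ case.} From the $X$-column of Table~\ref{tab:pos} one reads off, literal-by-literal, the identity $\ofp{X\psi}\!\downarrow_0\equiv\tt$ and $\ofp{X\psi}\!\downarrow_i\equiv\ofp{\psi}\!\downarrow_{i-1}$ for $i\ge 1$, which then propagates through $\wedge$ and $\vee$. From $\xi\models X\psi$ we get $\xi_1\models\psi$, and the IH applied to $\xi_1$ at position $i-1$ yields $\xi(i)=\xi_1(i-1)\models\ofp{\psi}\!\downarrow_{i-1}=\ofp{X\psi}\!\downarrow_i$.

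\textbf{The $G$ case and the main obstacle.} From $\xi\models G\psi$ we have $\xi_j\models\psi$ for every $j\ge 0$, so by IH $\xi(i)=\xi_{i-k}(k)\models\ofp{\psi}\!\downarrow_k$ for every $0\le k\le i$. To finish I would prove a small \emph{shift lemma} by side induction on the Boolean structure of $\ofp{\psi}$: for each $i\ge 0$ there is a finite set $K\subseteq\{0,\ldots,i\}$ such that $\bigwedge_{k\in K}\ofp{\psi}\!\downarrow_k$ logically implies $\ofp{G\psi}\!\downarrow_i$. For a $\nondeter$-start literal both sides project to $\tt$; for a determined-start literal $\langle p,k,\cur\rangle$ with $k\le i$, taking $K=\{k\}$ works because $G$ promotes $\cur$ to $\ge$, so the post-$G$ projection equals $p=\ofp{\psi}\!\downarrow_k$; conjunctions union their sets $K$. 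The hard part is the $\vee$ step of this side induction: in principle two disjuncts could want different shifts, breaking any attempt at a formula-level argument. This is exactly where the disjunction-alignment clause of Definition~\ref{def:ofp} is indispensable: inside any $\vee$-subtree of $\ofp{\psi}$ either all literals share a single determined start $k^*$ (so $K=\{k^*\}$ works uniformly across the disjunction) or every literal has start $\nondeter$ (so the projection is $\tt$ at every position). With the shift lemma in hand, $\xi(i)\models\ofp{G\psi}\!\downarrow_i$ follows by combining the individual facts $\xi(i)\models\ofp{\psi}\!\downarrow_k$ for $k\in K$, completing the induction.
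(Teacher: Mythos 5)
Your proposal is correct, and its skeleton coincides with the paper's: structural induction on $\phi$, with the $U$ case and the misaligned-$\vee$ case trivialized to $\tt$, the $R$ case reduced to $\phi_2$, the $X$ case handled by a one-step shift, and a side induction over the Boolean structure of $\ofp{\psi}$ for the hard $G$ case. Where you genuinely diverge is in how that side induction is organized. The paper argues semantically: it case-splits on $\ofp{\psi}$ being $\langle p,n,\cur\rangle$, $\langle p,n,\inf\rangle$, $\langle p,\nondeter,-\rangle$, $t_1\wedge t_2$, or $t_1\vee t_2$, carrying the hypothesis ``$\xi_j(i)\models\ofp{\psi}\!\downarrow_i$ for all $i,j\geq 0$'' through the recursion. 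You instead isolate a purely syntactic \emph{shift lemma}: a finite $K\subseteq\{0,\ldots,i\}$ with $\bigwedge_{k\in K}\ofp{\psi}\!\downarrow_k\Rightarrow\ofp{G\psi}\!\downarrow_i$, and then discharge it against the single fact $\xi(i)\models\ofp{\psi}\!\downarrow_k$ for $0\le k\le i$. This buys real rigor at exactly the delicate point: in the paper's $\vee$ sub-case, the side induction hypothesis is applied to each disjunct even though the disjunct satisfied by $\xi_j(i)$ may vary with $j$ and $i$, so neither disjunct's antecedent is actually established uniformly --- the non-uniformity you name explicitly. Your observation that the alignment clause of Definition~\ref{def:ofp} forces every $\vee$-subtree to carry a single common start (a determined $k^*$, or $\nondeter$ throughout), and that this property is preserved by all $Pos$ updates in Table~\ref{tab:pos} ($X$ shifts all determined starts uniformly, $U$ resets all to $\nondeter$, $R$ and $G$ leave starts untouched), is precisely the invariant that makes the disjunction step sound, and the paper leaves it implicit. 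One small gap in your sketch: the literal cases of your shift lemma cover only $\langle p,k,\cur\rangle$ and $\nondeter$-start literals, omitting determined-start literals with duration $\geq$ and $\inf$ (the latter is treated explicitly in the paper); both are fixed by the same choice $K=\{k\}$, since $G$ leaves them unchanged and their post-$G$ projection at $i$ is either $p=\ofp{\psi}\!\downarrow_k$ or $\tt$, so nothing breaks.
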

\begin{proof}
  We prove this lemma by structural induction over $\phi$. 
  \iffalse
  The proof is trivial for 
  the cases when the type of $\phi$ is the 
  literal, next, until, release, or $\wedge$ formula, so we here focus on the cases when 
  it is Global (G) and $\vee$:
  \fi
  \begin{enumerate}
  
    \item If $\phi=p$ is a literal, from Definition \ref{def:ofp} we know 
    $\ofp{\phi} = \langle p, 0, \cur\rangle$, and from Definition \ref{def:ofp_projection} 
    we know $\ofp{\phi}\!\downarrow_0=p$ and 
    $\ofp{\phi}\!\downarrow_i=\tt (i\geq 1)$. Since $\xi\models \phi=p$ so $\xi(0)\models p$ must 
    hold according to the LTL semantics. For $i\geq 1$ it is obviously true that 
    $\xi(i)\models \ofp{\phi}\!\downarrow_i=\tt$; 
    
    \item If $\phi=X\psi$, since $\xi\models\phi$ so $\xi_1\models\psi$ according to the 
    LTL semantics. By induction hypothesis, we know that $\xi_1 (i)\models \ofp{\psi}\!\downarrow_i$ 
    for every $i\geq 0$. Then according to the $Pos$ rules on $X$ operator in Table \ref{tab:pos}, we 
    know that $\xi\models \ofp{\phi}\!\downarrow_i$ for every $i\geq 1$. Also we know 
    $\ofp{\phi}\!\downarrow_0=\tt$, so it is apparent that $\xi(0)\models \ofp{\phi}\!\downarrow_0=\tt$;
    
    \item If $\phi=\phi_1 U \phi_2$, then according to the $Pos$ rule on $U$ operator in Table 
    \ref{tab:pos} we know that $l.start=
\nondeter$ for all $l\in [\ofp{\phi}]$. Thus from Definition 
    \ref{def:ofp_projection}, $ofp(\phi)\!\downarrow_i=\tt$ holds 
    for every $i\geq 0$. Hence $\xi(i)\models \ofp{\phi}\!\downarrow_i=\tt$ holds for all $i\geq 0$;
    
    \item If $\phi=\phi_1 R \phi_2$, then first we know $\xi\models\phi$ implies 
    $\xi\models\phi_2$ from the LTL semantics on R operator. Thus by induction hypothesis we already 
    have $\xi(i)\models \ofp{\phi_2}\!\downarrow_i$ for every $i\geq 0$. Moreover it is true that 
    $\ofp{\phi_2}=\ofp{\phi}$ according to the $Pos$ rules on $R$ operator in Table \ref{tab:pos}, so 
    does $\ofp{\phi_2}\!\downarrow_i=\ofp{\phi}\!\downarrow_i$ for all $i\geq 0$. Thus it concludes that 
    $\xi(i)\models \ofp{\phi}\!\downarrow_i$ for every $i\geq 0$;

    \item If $\phi=G\psi$, then $\xi\models\phi$ implies that $\xi_i\models\psi$ for all $i\geq 0$. 
    By induction hypothesis, for every $i\geq 0$ and $j\geq 0$ we have the assumption that 
    $\xi_j(i)\models \ofp{\psi}\!\downarrow_i$. Now we consider the possibilities of $\ofp{\phi}$:
    \begin{itemize}
      \item If $\ofp{\psi}=\langle p, n, \cur\rangle$, then from Definition \ref{def:ofp} 
      we know $\ofp{\phi}=Pos(\ofp{\psi}, G)=\langle p, n, \geq\rangle$. For $0\leq k < n$ we have $\xi(k)\models\ofp{\phi}\!\downarrow_k=\tt$ according to Definition \ref{def:ofp_projection}. And for $k\geq n$ we know 
      that $\ofp{\phi}\!\downarrow_k=p$. Since $\xi_j(i)\models \ofp{\psi}\!\downarrow_i$ for every $i,j\geq 0$ and $\ofp{\psi}=\langle p, n, \cur\rangle$, so $\xi(k)\models p$ also holds for $k\geq n$. 
      Thus for every $k\geq 0$ we have $\xi(k)\models\ofp{\phi}\!\downarrow_k$;
      \item If $\ofp{\psi}=\langle p, n, \inf\rangle$, then from Definition \ref{def:ofp} 
      we know $\ofp{\phi}=Pos(\ofp{\psi}, G)=\langle p, n, \inf\rangle$. For $0\leq k < n$ we have $\xi(k)\models\ofp{\phi}\!\downarrow_k=\tt$. And $\xi(k)\models\ofp{\phi}\!\downarrow_k=p$ holds if $k=n$. For $k> n$ we know $\ofp{\phi}\!\downarrow_k=\tt$ so $\xi(k)\models\ofp{\phi}\!\downarrow_k$ is always true;
      \item If $\ofp{\psi}=\langle p, \nondeter, -\rangle$ where $-$ can be $\cur$, $\inf$ or $\geq$, then 
      from Definition \ref{def:ofp} we have $\ofp{\phi}=Pos(\ofp{\psi}, G)=\langle p, \nondeter, -\rangle$. Thus according to Definition \ref{def:ofp_projection} we know $\ofp{\phi}\!\downarrow_k=\tt$ for every 
      $k\geq 0$. So $\xi(k)\models\ofp{\phi}\!\downarrow_k$ is always true;
      \item Inductively if $\ofp{\psi}=t_1\wedge t_2$, then we know $\ofp{\psi}\!\downarrow_i=t_1\!\downarrow_i\wedge t_2\!\downarrow_i$ and $\xi_j(i)\models \ofp{\psi}\!\downarrow_i$ implies $\xi_j(i)\models t_1\!\downarrow_i$ and $\xi_j(i)\models t_2\!\downarrow_i$ hold for every $i,j\geq 0$. By inductive hypothesis we have proven that $\xi(k)\models Pos(t_1, G)\!\downarrow_k$ and $\xi(k)\models Pos(t_2, G)\!\downarrow_k$ for $k\geq 0$, so $\xi(k)\models Pos(t_1, G)\!\downarrow_k\wedge Pos(t_2, G)\!\downarrow_k$ also holds. As we know $\ofp{\phi}=Pos(t_1,G)\wedge Pos(t_2,G)$, so it is true that 
      $\xi(k)\models\ofp{\phi}\!\downarrow_k$ for $k\geq 0$;
      \item If $\ofp{\psi}=t_1\vee t_2$, then we know $\ofp{\psi}\!\downarrow_i=t_1\!\downarrow_i\vee t_2\!\downarrow_i$ and $\xi_j(i)\models \ofp{\psi}\!\downarrow_i$ implies $\xi_j(i)\models t_1\!\downarrow_i$ or $\xi_j(i)\models t_2\!\downarrow_i$ holds for every $i,j\geq 0$. By inductive hypothesis we have proven that $\xi(k)\models Pos(t_1, G)\!\downarrow_k$ or $\xi(k)\models Pos(t_2, G)\!\downarrow_k$ for $k\geq 0$, so $\xi(k)\models Pos(t_1, G)\!\downarrow_k\vee Pos(t_2, G)\!\downarrow_k$ also holds. 
      According to Definition \ref{def:ofp} if $\ofp{\phi}=Pos(t_1,G)\vee Pos(t_2,G)$ then it is true that $\xi(k)\models\ofp{\phi}\!\downarrow_k$ for $k\geq 0$. And if $\ofp{\phi}=Pos(t_1,G)'\vee Pos(t_2, G)'$ from Definition \ref{def:ofp} then we know $\ofp{\phi}\!\downarrow_k=\tt$ for every $k\geq 0$, so $\xi(k)\models\ofp{\phi}\!\downarrow_k$ holds as well. 
    \end{itemize}
    Thus, we prove that $\xi(i)\models\ofp{G\psi}\!\downarrow_i$ holds for every $i\geq 0$;

    \item If $\phi=\phi_1\wedge \phi_2$, then from Definition \ref{def:ofp} we know that 
    $\ofp{\phi}=\ofp{\phi_1}\wedge \ofp{\phi_2}$, and so does 
    $\ofp{\phi}\!\downarrow_i=\ofp{\phi_1}\!\downarrow_i\wedge \ofp{\phi_2}\!\downarrow_i$ 
    for $i \geq 0$ via Definition \ref{def:ofp_projection}. 
    Also $\xi\models\phi=\phi_1\wedge\phi_2$ implies $\xi\models\phi_1$ and $\xi\models\phi_2$. 
    By induction hypothesis, it holds that $\xi(i)\models \ofp{\phi_1}\!\downarrow_i$ and 
    $\xi(i)\models \ofp{\phi_2}\!\downarrow_i$ for every $i\geq 0$. 
    So $\xi(i)\models \ofp{\phi_1}\!\downarrow_i\wedge \ofp{\phi_2}\!\downarrow_i$, 
    which proves that $\xi(i)\models \ofp{\phi_1\wedge\phi_2}\!\downarrow_i$ for $i\geq 0$;

    \item If $\phi=\phi_1\vee\phi_2$, then $\xi\models\phi$ implies either $\xi\models\phi_1$ or 
    $\xi\models\phi_2$ holds. Assume that $\xi\models\phi_1$ holds. 
    Also according to Definition \ref{def:ofp}, there are two possibilities 
    on $\ofp{\phi}$: 1) If $\ofp{\phi}= \ofp{\phi_1}\vee \ofp{\phi_2}$, then by induction hypothesis 
    we know that $\xi\models\phi_1$ implies $\xi(i)\models \ofp{\phi_1}\!\downarrow_i$ for all 
    $i\geq 0$. Moreover, we can conclude that 
    $\ofp{\phi_1}\!\downarrow_i\Rightarrow \ofp{\phi}\!\downarrow_i$ from Definition 
    \ref{def:ofp_projection}. Combining the conclusion above we can 
    finally prove $\xi(i)\models \ofp{\phi}\!\downarrow_i$ for all 
    $i\geq 0$; 2) If $\ofp{\phi}= \ofp{\phi_1}'\vee \ofp{\phi_2}'$, then since $l.start$ is updated to 
    $
\nondeter$ for every $l$ in 
    $[\ofp{\phi_1}']$ and $[\ofp{\phi_2}']$: it causes that $\ofp{\phi_1}'\!\downarrow_i$ and 
    $\ofp{\phi_2}'\!\downarrow_i$ are assigned to $\tt$ according to Definition 
    \ref{def:ofp_projection}, and so does $\ofp{\phi}$. Hence it is 
    easy to check that $\xi(i)\models \ofp{\phi}\!\downarrow_i=\tt$ for all 
    $i\geq 0$. Finally the proof is done.
  \end{enumerate}
  
\end{proof}

Lemma \ref{lem:reduction_3} directly implies the following theorem for checking unsatisfiable 
formulas:  

\begin{theorem}[SAT-Based Unsatisfiable Checking]\label{thm:satunsat}
    Given an LTL formula $\phi$, if there exists a position $i\ge 0$ such that 
    $\ofp{\phi}\downarrow_i$ 
    is unsatisfiable, then $\phi$ is also unsatisfiable. 
\end{theorem}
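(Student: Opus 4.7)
The plan is to prove the theorem as a direct contrapositive of Lemma \ref{lem:reduction_3}, which has already done the heavy lifting. The statement I want to establish is: if some projection $\ofp{\phi}\!\downarrow_i$ is unsatisfiable, then $\phi$ is unsatisfiable. Equivalently, I will show that if $\phi$ is satisfiable, then every projection $\ofp{\phi}\!\downarrow_i$ is satisfiable.

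First I would assume, toward the contrapositive, that $\phi$ is satisfiable. By the definition of satisfiability, there exists an infinite trace $\xi$ such that $\xi \models \phi$. Then I would invoke Lemma \ref{lem:reduction_3} directly: for every position $i \ge 0$, we have $\xi(i) \models \ofp{\phi}\!\downarrow_i$. Since $\ofp{\phi}\!\downarrow_i$ is a pure propositional formula (as observed right after Definition \ref{def:ofp_projection}), the existence of the propositional assignment $\xi(i)$ witnessing it shows that $\ofp{\phi}\!\downarrow_i$ is satisfiable, for every $i \ge 0$.

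Taking the contrapositive of the above yields exactly the theorem: if there is any position $i$ for which $\ofp{\phi}\!\downarrow_i$ is unsatisfiable, then no trace $\xi$ can satisfy $\phi$, hence $\phi$ is unsatisfiable.

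There is no real obstacle here, since all the structural induction effort sits in Lemma \ref{lem:reduction_3}. The only subtlety worth a sentence in the write-up is noting that $\ofp{\phi}\!\downarrow_i$ is genuinely propositional (no positional annotations or temporal operators remain), so that ``satisfiable'' has its standard Boolean meaning and the witness $\xi(i) \in 2^L$ is an appropriate assignment. The proof is therefore only a few lines long and essentially a corollary.
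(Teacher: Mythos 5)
Your proposal is correct and matches the paper's own argument exactly: the paper also obtains Theorem~\ref{thm:satunsat} as the immediate contrapositive of Lemma~\ref{lem:reduction_3}, with the satisfying trace $\xi$ furnishing the propositional witness $\xi(i)$ for each projection $\ofp{\phi}\!\downarrow_i$. Your added remark that $\ofp{\phi}\!\downarrow_i$ is a pure propositional formula (so Boolean satisfiability is the right notion) is a sound clarification consistent with the paper's discussion following Definition~\ref{def:ofp_projection}.
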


\iffalse
\begin{proof}
  One can check easily that, this theorem is immediately proven by applying the reverse law on 
  Lemma \ref{lem:reduction_3}.
\end{proof}
\fi

However, this theorem can only be implemented as a heuristics technique because we cannot check 
every position of an infinite model in the worst case. On the other hand, it is also not necessary 
to check the 
accurate position every time: instead we can find the unsatisfiable position in a more 
abstract way. 
To achieve this, we need to introduce a more abstract definition for projection on obligation formulas.

\begin{definition}[Abstract Projection on Obligation Formulas]\label{def:ofp_abstract}
  Given an obligation formula with positions $\ofp{\phi}$ from $\phi$
  and a literal set $S$, we define its projection under $S$, denoted
  as $\ofp{\phi}\downarrow_S$, as follows:
    \begin{itemize}
      \item $\ofp{\phi} = l$: if $l\in S$ then $\ofp{\phi}\downarrow_S=l.prop$, else 
      $\ofp{\phi}\downarrow_S=\tt$;
      \item $\ofp{\phi} = t_1\wedge t_2$: $\ofp{\phi}\downarrow_S = t_1\downarrow_S \wedge t_2\downarrow_S$;
      \item $\ofp{\phi} = t_1\vee t_2$: $\ofp{\phi}\downarrow_S = t_1\downarrow_S \vee t_2\downarrow_S$;
    \end{itemize}

\end{definition}

Informally speaking, $\ofp{\phi}\downarrow_S$ is a Boolean formula in which literals not in $S$ are 
replaced by $\tt$, and those in $S$ are replaced by their first elements. 
The following corollary lists the strategies we apply in our algorithm. 

\begin{corollary}\label{coro:abstract}
  Given an LTL formula $\phi$, let $S=\{l \mid l\in [\ofp{\phi}] \wedge
  l.duration=\geq\}$, then $\phi$ is unsatisfiable if one of the
  following conditions is true:
  \begin{enumerate}
    \item There exists $l\in [\ofp{\phi}]$ such that $l.start=i$, and $\ofp{\phi}\downarrow_i\equiv\ff$;
    \item $\ofp{\phi}\downarrow_S\equiv\ff$;
    \item There exists $l\in [\ofp{\phi}]$ such that $l.start=
      \nondeter$, and $\ofp{\phi}\downarrow_{S'\cup\{l\}}\equiv\ff$,
      where $S'\subseteq S$ and $l.start = 0$ for each $l\in S'$;
    \item There exists $l\in [\ofp{\phi}]$ such that $l.duration=\inf$, and 
    $\ofp{\phi}\downarrow_{S\cup\{l\}}\equiv\ff$.
  \end{enumerate}
\end{corollary}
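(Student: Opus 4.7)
My overall strategy is to prove each of the four conditions separately by contraposition: assume $\xi\models\phi$ and then exhibit, for each condition, a specific valuation that satisfies the relevant projection, contradicting its being equivalent to $\ff$. Condition~1 is essentially a rewording of Theorem~\ref{thm:satunsat}: since $i=l.start$ is a concrete nonnegative integer, $\ofp{\phi}\downarrow_i$ is the positional projection of Definition~\ref{def:ofp_projection}, and Theorem~\ref{thm:satunsat} applies directly. The existence of some $l$ with $l.start=i$ only serves to make the search for useful positions finite -- it is not needed for soundness.

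For Condition~2, I would establish the following auxiliary lemma by structural induction on $\phi$, tracking how the $Pos$ table propagates the \emph{duration} field: \emph{if $\xi\models\phi$ and $l\in[\ofp{\phi}]$ has $l.duration=\geq$, then there exists $N_l\geq 0$ such that $\xi(i)\models l.prop$ for every $i\geq N_l$}. The base case (literals) is vacuous; the $X,U,R$ cases track the shift/relabeling of the start field while preserving $\geq$; and the $G$ case is where new $\geq$-literals are born from $\cur$-literals whose prop must then hold at all positions $\geq$ their (possibly shifted) start. Taking $N=\max_{l\in S}N_l$, the assignment $\xi(N)$ makes every atom kept in $\ofp{\phi}\downarrow_S$ true, so $\xi(N)\models\ofp{\phi}\downarrow_S$; combined with $\ofp{\phi}\downarrow_S\equiv\ff$, this contradicts $\xi\models\phi$.

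For Conditions~3 and~4 I would argue in the same style but use a different witness position. For Condition~3, each $l'\in S'$ has $l'.start=0$ and $l'.duration=\geq$, so by the lemma above $\xi(i)\models l'.prop$ for every $i\geq 0$. The extra $\nondeter$-start literal $l$ represents an obligation that must be discharged at some (existentially quantified) position $k_l$ of $\xi$ -- this is again verified by structural induction, using the fact that $l$ was born of a $U$ or top-level $F$ subformula whose semantics forces an activation. At that position $k_l$, all $l'.prop$ ($l'\in S'$) still hold, together with $l.prop$; hence $\xi(k_l)\models\ofp{\phi}\downarrow_{S'\cup\{l\}}$, a contradiction. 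Condition~4 is treated analogously: an $\inf$-duration literal has its prop true at infinitely many positions, in particular at one past every $N_{l'}$ for $l'\in S$, and at this position the conjunction of all $\geq$-literals with the $\inf$-literal evaluates to true in $\ofp{\phi}\downarrow_{S\cup\{l\}}$.

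\textbf{Main obstacle.} The hard part is the semantic lemma for $\nondeter$-start and $\inf$-duration literals used in Conditions~3--4. Unlike $\geq$-duration literals, whose interpretation (``$l.prop$ holds from $l.start$ onwards'') is fixed by a $G$ in the AST, a $\nondeter$-start literal can arise either from a $U$ (where activation is forced) or from a $\vee$ with mismatched starts (where activation need not occur in every model). The induction therefore has to carry a stronger invariant -- for instance, that any $\nondeter$-start literal appearing at a position in $\ofp{\phi}$ that is not itself beneath an ``untamed'' disjunction is forced to activate. Getting this invariant precisely right, so that exactly the cases in which Condition~3 (resp.\ Condition~4) fires correspond to genuinely forced activations, is the delicate part of the argument; once it is in place, the rest of the proof is a matter of substituting the guaranteed positions into $\ofp{\phi}\downarrow_{S'\cup\{l\}}$ and reading off the contradiction.
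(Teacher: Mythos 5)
Your Condition~1 is fine, and your contrapositive instinct is the right one --- indeed it is forced on you, because the paper's own one-sentence justification (``guaranteed by Theorem~\ref{thm:satunsat}'') is itself loose: when $S$ contains $\nondeter$-start literals (e.g.\ $\ofp{FGp}=\langle p,\nondeter,\geq\rangle$), we have $\ofp{\phi}\!\downarrow_i\equiv\tt$ at \emph{every} concrete position $i$, so Condition~2 is not literally an instance of Theorem~\ref{thm:satunsat} and a direct semantic argument in the style of Lemma~\ref{lem:reduction_3} is needed. However, your auxiliary lemma for Condition~2 is false as stated. Take $\phi=Ga\vee Gb$: both disjuncts' literals have start $0$, so the matched-start clause of Definition~\ref{def:ofp} keeps $\ofp{\phi}=\langle a,0,\geq\rangle\vee\langle b,0,\geq\rangle$, and \emph{both} literals lie in $S$. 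A trace satisfying $Ga$ alone is a model of $\phi$ in which $b$ never holds, so no $N_b$ exists, and your conclusion that ``$\xi(N)$ makes every atom kept in $\ofp{\phi}\!\downarrow_S$ true'' fails. The per-literal claim must be replaced by the formula-level statement ``if $\xi\models\phi$ then $\xi(i)\models\ofp{\phi}\!\downarrow_S$ for all sufficiently large $i$,'' proved by structural induction in which the $\vee$ case goes through by monotonicity of projection under disjunction ($t_1\!\downarrow_S$ implies $(t_1\vee t_2)\!\downarrow_S$), exactly the pattern of the paper's proof of Lemma~\ref{lem:reduction_3}.

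For Conditions~3--4 your proposal does not actually contain a proof: you explicitly leave the key invariant (``which $\nondeter$-start or $\inf$-duration literals are genuinely forced to activate'') unformulated, and the same disjunction phenomenon shows the naive per-literal version is false there too --- e.g.\ $\ofp{G(Xa\vee b)}=\langle a,\nondeter,\inf\rangle\vee\langle b,\nondeter,\inf\rangle$, yet a model need not satisfy $a$ infinitely often. The structural observation that rescues the corollary, and on which the missing invariant should be built, is that Definition~\ref{def:ofp_abstract} replaces every \emph{unselected} literal by $\tt$; hence $\ofp{\phi}\!\downarrow_{S'\cup\{l\}}\equiv\ff$ (resp.\ $\ofp{\phi}\!\downarrow_{S\cup\{l\}}\equiv\ff$) can only occur if each disjunction contributing to the contradiction has \emph{all} of its literals selected, and since Conditions~3--4 add only the single literal $l$ to a set of $\geq$-duration literals, any disjunction containing $l$ alongside an unselected sibling collapses to $\tt$ and cannot produce $\ff$. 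So whenever the hypothesis of Condition~3 or~4 fires, $l$ effectively occurs conjunctively, and there the forced-activation claims (some position $k_l$ for a $U$-born literal; infinitely many positions for an $\inf$-duration one, hence one beyond $\max_{l'\in S}N_{l'}$) are true and yield the contradiction you want. Without this argument --- which is the entire content the paper's one-line proof sweeps under Theorem~\ref{thm:satunsat} --- your proof establishes only Condition~1.
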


\iffalse
\begin{corollary}\label{coro:abstract}
  Given an LTL formula $\phi$ and let $S=\{l|l\in \ofp{\phi} \wedge l.duration=\geq\}$, then $\phi$ is unsatisfiable if one of the following conditions is true:
  \begin{enumerate}
    \item There exists $l\in \ofp{\phi}$ such that $l.start=i$, and $\ofp{\phi}\downarrow_i\equiv\ff$;
    \item $\ofp{\phi}[\tt/\mathcal{L}-S]\equiv\ff$;
    \item There exists $l\in \ofp{\phi}$ such that $l.start=
\nondeter$, and 
    $\ofp{\phi}[\tt/\mathcal{L}-S'\cup\{l\}]\equiv\ff$, where $S'\subseteq S$ and $l.start = 0$ for each 
    $l\in S'$;
    \item There exists $l\in \ofp{\phi}$ such that $l.duration=\inf$, and 
    $\ofp{\phi}[\tt/\mathcal{L}-S\cup\{l\}]\equiv\ff$;
  \end{enumerate}
\end{corollary}
\fi

Note here $S$ cannot be empty. The correctness of the corollary is
guaranteed by Theorem \ref{thm:satunsat}: An unsatisfiable position
can always be found in above four conditions. Since the number of
literals is linear to the size of $\phi$, so the additional cost for
the unsatisfiable checking is polynomial to the size of $\phi$. Below
we use several examples to demonstrate the efficiency of our approach:

\begin{example}
  \begin{enumerate}
  \item Consider the formula $\phi=a\wedge (b R \neg a)$. We have
    $\ofp{\phi}=\langle a, 0, \cur\rangle\wedge\langle \neg a, 0,
    \cur\rangle$. According to the first item of Corollary
    \ref{coro:abstract} we know $\ofp{\phi}\downarrow_0=a\wedge\neg
    a=\ff$. So $\phi$ is unsatisfiable;
  \item Consider the formula $\phi= Ga \wedge G(\neg a\wedge b)$. We
    have $\ofp{\phi}=\langle a, 0, \geq\rangle\wedge\langle \neg a,
    0, \geq\rangle\wedge\langle b, 0, \geq\rangle$. Then from the
    second item of Corollary \ref{coro:abstract} we know $\phi$ is
    unsatisfiable;
    \item For the formula $\phi=F a\wedge G\neg a$, we can use the third item of Corollary 
    \ref{coro:abstract} to check it is unsatisfiable;
    \item For the formula $\phi=Ga \wedge GF\neg a$, the fourth item of Corollary 
    \ref{coro:abstract} can be used to check it is unsatisfiable.
  \end{enumerate}
\iffalse
  Consider the pattern formula 
  $\phi=\bigwedge_{1\leq i\leq n}FG(a_i\leftrightarrow a_{i+1})\wedge FG(a_{n+1}\leftrightarrow \neg a_1)$, apply our approach and we get $\ofp{\phi}=\bigwedge_{1\leq i\leq n}(l_i\leftrightarrow l_{i+1})\wedge (l_{n+1}\leftrightarrow \neg l_1)$, in which $l_i=\langle a_i, 
\nondeter, \geq\rangle (n+1\geq i\geq 1)$. Then according to the second strategy in Corollary \ref{coro:abstract} we can conclude 
  $\phi$ is unsatisfiable due to $\ofp{\phi}\downarrow_S=\ff$. As a result, the checking result is 
  benefit from the high performance of SAT solvers.
\fi
\end{example}

Note $\ofp{\phi}$ is treated as a proposition formula with the
extended alphabet $\mathcal{L}$. Each element $l$ in $\mathcal{L}$ is
a triple, to keep the positional information. The projections
$\ofp{\phi}\downarrow_i$ and $\ofp{\phi}\downarrow_S$ are
propositional formulas over the literals $L_\phi$, and they are the real ones used
for checking satisfiability in our algorithms.

\section{Experiments}\label{sec:exp}

In this section we first introduce the experimental approach, and then 
present the results.  

\iffalse
\begin{figure}[t]
\begin{minipage}[b]{0.45\linewidth}
\centering
\includegraphics[scale = 0.53]{Qformulas}
\caption{Experimental results on Q pattern formulas ($\phi=\bigwedge_{1\leq i\leq n} (Fa_i\vee Ga_{i+1})$).}
\label{fig:Qformulas}
\vspace{0.3cm}
\end{minipage}
\hspace{0.6cm}
\begin{minipage}[b]{0.45\linewidth}
\centering
\includegraphics[scale = 0.53]{O2formulas}
\caption{Experimental results on O2 pattern formulas ($\phi = \bigwedge_{1\leq i\leq n-1}FG(a_i\leftrightarrow a_{i+1}) \wedge FG(a_n\leftrightarrow \neg a_1)$).}
\label{fig:O2formulas}

\end{minipage}
\end{figure}
\fi

\begin{table*}
\renewcommand{\arraystretch}{1.3} 
\caption{Comparison results for the \textbf{Schuppan-collected} benchmarks.
Each cell in the table lists the checking time on satisfiable (above) and 
unsatisfiable (below) cases in the given pattern. For some patterns, the 
results on unsatisfiable formulas are empty as no unsatisfiable cases are 
in these patterns.}\label{tab:schuppan-collected}
\centering
\scalebox{1.2}{
\begin{tabular}{|l|l|l|l|l|l|l|l|l|}
  \hline
  Formula Type  &  pltl  &  ls4  &  TRP++  &  NuSMV-BDD  &  Aalta\_v0.1  &  Aalta\_v0.2\\
\hline
acacia/demo-v3  &  366.8  & 10.9  & \cellcolor{blue!35} 5.9  &  2753.5  &  1765.4  &  495.8\\

\hline
\multirow{2}{*}{alaska/lift}  &  3615.5  &  2155.4  &  10959.3  &  7214.3  &  5828.8  &  \cellcolor{blue!35}1483.4\\

 & 1264.5 & 302.4 & 1604.1 & \cellcolor{green!35} 44.7 & 1521.8 & 1520.1\\
\hline

anzu/amba  & \cellcolor{blue!35} 605.3  &  3913.8  &  4787.6  &  5524.4  &  3421.1  &  1958.9\\

\hline
anzu/genbuf  &  2849.7  &  3725.7  &  5101.4  &  5759.3  &  6207.7  & \cellcolor{blue!35} 1742.9\\
\hline

rozier/counter  &  \cellcolor{blue!35}1415.3  &  2735.1  &  1570.3  &  1502.1  &  3378.9  &  3562.5\\

\hline
\multirow{2}{*}{rozier/formulas}  &  178.5  & \cellcolor{blue!35} 65.3  &  41922.5  &  2433.5  &  2308.7 &  137.3\\

  & 185.8 & \cellcolor{green!35}1.2 & 312.1 &  3.8 & 124.0 & 93.3\\
\hline

rozier/pattern  &  15.1  &  9105.1  &  9897.3  &  9408.6  &  31.484  &  \cellcolor{blue!35} 14.3\\

\hline
\multirow{2}{*}{schuppan/O1formula}  & \cellcolor{blue!35} 0.3  &  629.8  &  461.9  &  505.8  &  517.1  &  1.5\\

  & 1026.9 & 646.1 & 461.3 & 499.8 & 12.9 & \cellcolor{green!35} 1.1\\
\hline
\multirow{2}{*}{schuppan/O2formula}  & \cellcolor{blue!35} 2.7  &   748.0  &  547.7  &  747.1  &  632.7  &   11.1\\

  & 1081.5 & 1081.4 & 1330.9 & 913.7 & 1351.2 & \cellcolor{green!35} 1.2\\
\hline

\multirow{2}{*}{schuppan/phltl}  &  120.5  &  622.1  &  625.5  &  668.7  &  121.0  & \cellcolor{blue!35} 7.1 \\
 & 300.4  &  455.5  &  241.7  & \cellcolor{green!35} 23.4  &  480.0  &  364.8\\
\hline

\multirow{2}{*}{trp/N5x}  & \cellcolor{blue!35} 2.7  &  41.9  &  127.8  &  5.0  &  1556.7  &  11.3\\

  & 21.7 & \cellcolor{green!35}14.4 & 25.3 &  42.4 & 3669.0 & 1345.3\\
\hline
\multirow{2}{*}{trp/N5y}  & \cellcolor{blue!35} 1.1  &  30.6  &  67.9  &  94.3  &  5715.1  &  6.1\\

  & 2760.3 &   380.9 & \cellcolor{green!35} 16.2 &  18.5 & 6760.6 & 2760.4\\
\hline
\multirow{2}{*}{trp/N12x}  &  10095.9  &  431.1  &  1022.1  &  13326.3  &  14836.4  & \cellcolor{blue!35} 36.2\\

  & 9816.5 & \cellcolor{green!35}334.4 & 570.6 & 9465.0 & 9487.2 &  1623.4\\
\hline
trp/N12y  &  106.5  &  591.6  &  1502.1  &  11547.1  &  8677.8  & \cellcolor{blue!35} 12.4\\

  & 4020.5  &  294.3  &  565.8  & \cellcolor{green!35} 74.9  &  4020.9  &  2020.6\\

\hline
\multirow{2}{*}{Total}  &  19375.9  &  24806.4  &  78288.4  &  61069.8  &  76398.6  & \cellcolor{red!35} 9480.8\\

& 20477.8 &  3555.6 & 5217.0 & 11090.2 & 28428.9 & \cellcolor{red!35} 9730.6\\
\hline

\end{tabular}
}

\end{table*}

\subsection{Experimental Strategies}

We use Rice University's SUG@R cluster%
\footnote{\url{http://www.rcsg.rice.edu/sharecore/sugar/}} 
as the experimental platform. The cluster contains 134 Sun Microsystems 
SunFire x4150 nodes, each of which includes 8 cores of 2.83GHz Intel Xeon 
Harpertown CPUs with 16GB RAM. In our experiments, we use Polsat as the 
testing platform \cite{LPZVHCoRR13} to compare the tool with other LTL 
satisfiability solvers. Polsat is run on a node of SUG@R, and the tested 
tools, which are integrated into Polsat, occupy each a unique core (the 
number of tools is less than 8). Timeout limit was set to 60 seconds. 

There are two \Aalta\ versions in our experiments: \Aalta\_v0.1 for
the old tool \cite{LZPVH13} and \Aalta\_v0.2 the current one, which implements 
the algorithms described in this paper%
\footnote{\url{http://www.lab205.org/aalta/}}.  
\Aalta\_v0.2 uses MiniSat \cite{ES03} solver as the SAT solver, and the 
bool2cnf%
\footnote{\url{http://www-ise4.ist.osaka-u.ac.jp/~t-tutiya/sources/bool2cnf/}}
tool to provide the ``DIMACS CNF'' input format for MiniSat. 
\iffalse
Both these tools are invoked externally. That is, \Aalta\_v0.2 provides 
the input for bool2cnf, the output of bool2cnf is the input of MiniSat, 
and then \Aalta\_v0.2 obtains Minisat's output for further processing. 
(We plan full integration when the new tool is released. These two 
tools are purely invoked and not parameters are used.  
\fi 
Several other LTL satisfiability solvers are also involved in the
experiments. Among them, the pltl tool%
\footnote{Three tool versions can be found at 
\url{http://users.cecs.anu.edu.au/~rpg/PLTLProvers/}; we use the first one, 
following \cite{SD11}.} 
\cite{Sch98} is the representative of tableau-based approach; 
the TRP++ tool \cite{HK03} implements a temporal-resolution strategy; 
the NuSMV tool \cite{CCGGPRST02} uses a model-checking-based method. 
Since NuSMV applies BDD technique, we use NuSMV-BDD to denote it. 
We did not use here NuSMV-BMC (bounded model checking), which can be used 
to check satisfiability, but fails to check unsatisfiability.
We also did not use here the Alaska tool~\cite{DDMR08}, which is the
implementation of antichain-based checking, as it fails to run on SUG@R. 
We did use the fairly recent ls4 tool, which is a SAT-based PLTL 
prover\cite{SW12}%
\footnote{\url{http://www.mpi-inf.mpg.de/~suda/ls4.html}}. 
The input of ls4 is the same as that of TRP++. 
Since the input of TRP++ must be in SNF (Separated Normal Form 
\cite{Fish97}), a SNF generator is also required. 
The \textit{translate} generator is available from the TRP++ website%
\footnote{\url{http://cgi.csc.liv.ac.uk/~konev/software/trp++/}}. 

The tools introduced above use several parameters and their performance may
vary on their selection.  In \cite{SD11}, Schuppan and Darmawan collected 
the winning configurations for each tool.  Some of the parameters in these 
configurations are not, however, available in the tools' new versions; 
instead we chose the parameters that are the closest to the ones in the 
winning configurations. Specifically, for pltl we used the ``-tree'' 
parameter;  for TRP++ we used ``-sBFS -FSR''; for the SNF generator 
\textit{translate} we used ``-s -r''; for NuSMV-BDD we used 
``-dcx -f -flt -dynamic''; and for ls4 we used ``-r2l''. In our experiments,
pltl, TRP++, NuSMV and ls4 versions are r1424, 2.2, 2.5.4 and 1.0.1, 
respectively. 

In the experiments we consider all benchmarks from \cite{RV10,SD11,LZPVH13}.
For simplicity, we call the formulas from \cite{SD11} as 
\textit{schuppan-collected} benchmarks in the following. This benchmark 
contains a total amount of 7446 formulas. To test the scalability of the 
tools on random formulas, we followed \cite{RV10} with length varying from 
100 to 200 and variables number set to 3. For each length we tested a group
of 500 formulas.  We also tested the \textit{random conjunction formulas} 
introduced in \cite{LZPVH13}. A random conjunction has the form of 
$\bigwedge_{1\leq i\leq n} P_i$, where each $P_i$ is a 
\emph{specification pattern} randomly chosen from \cite{DAC98}%
\footnote{\url{http://patterns.projects.cis.ksu.edu/documentation/patterns/ltl.shtml}},
where $n$ here varies from 1 to 20 and for each $n$ we created 500 cases. 
%MYV: I do not understand "each of which contains 500 cases".
In summary, we tested the tools on approximately 50 patterns and 27,446 
formulas. We do not find any inconsistency among the results from different
tools.

\subsection{Experimental Results}

\iffalse
\begin{figure}[t]
\begin{minipage}[b]{0.45\linewidth}
\centering
\includegraphics[scale = 0.52]{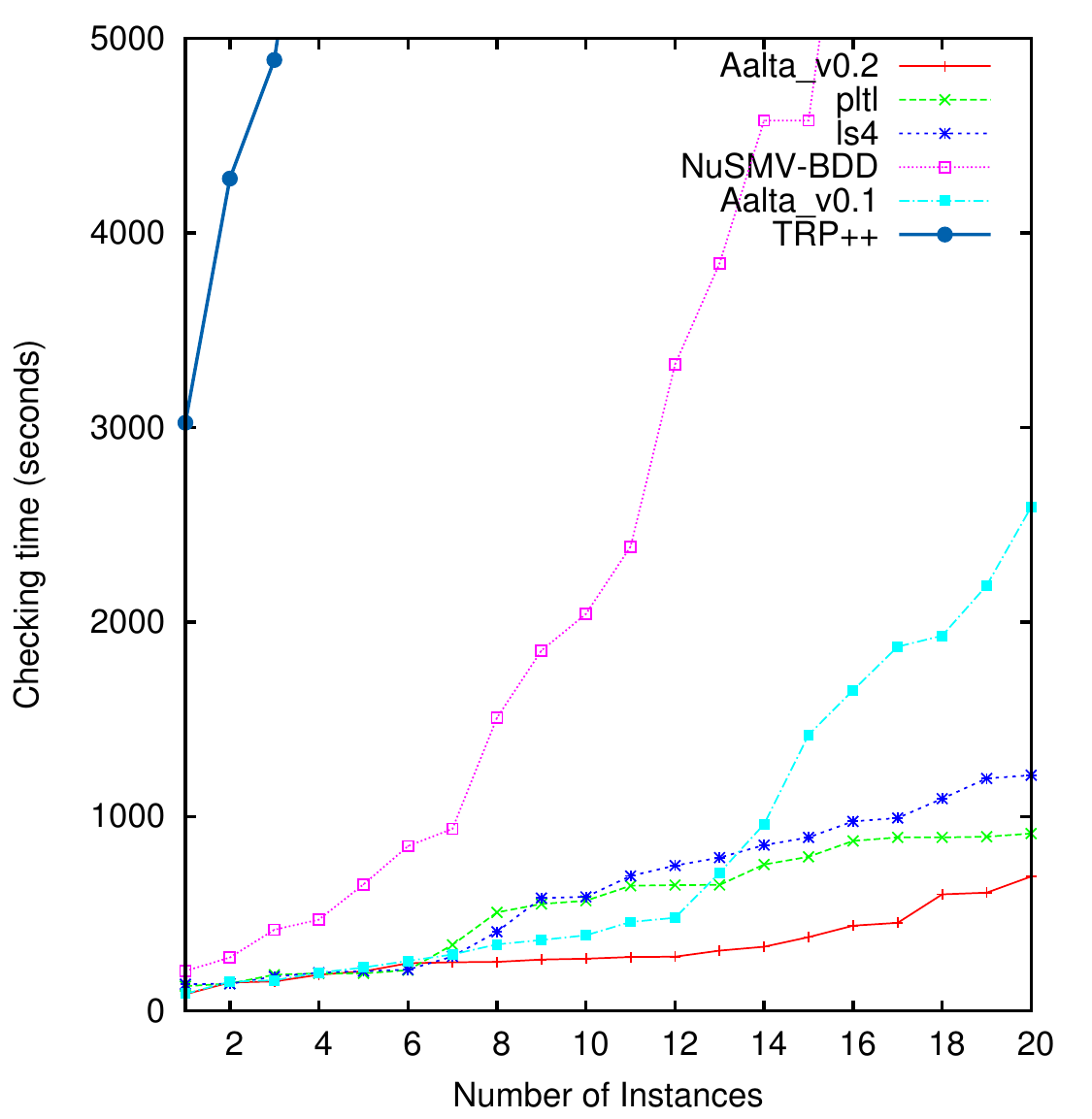}
\caption{Experimental results on extended random formulas with 3 variables.}
\label{fig:random}
\end{minipage}
\hspace{0.6cm}
\begin{minipage}[b]{0.45\linewidth}
\centering
\includegraphics[scale = 0.52]{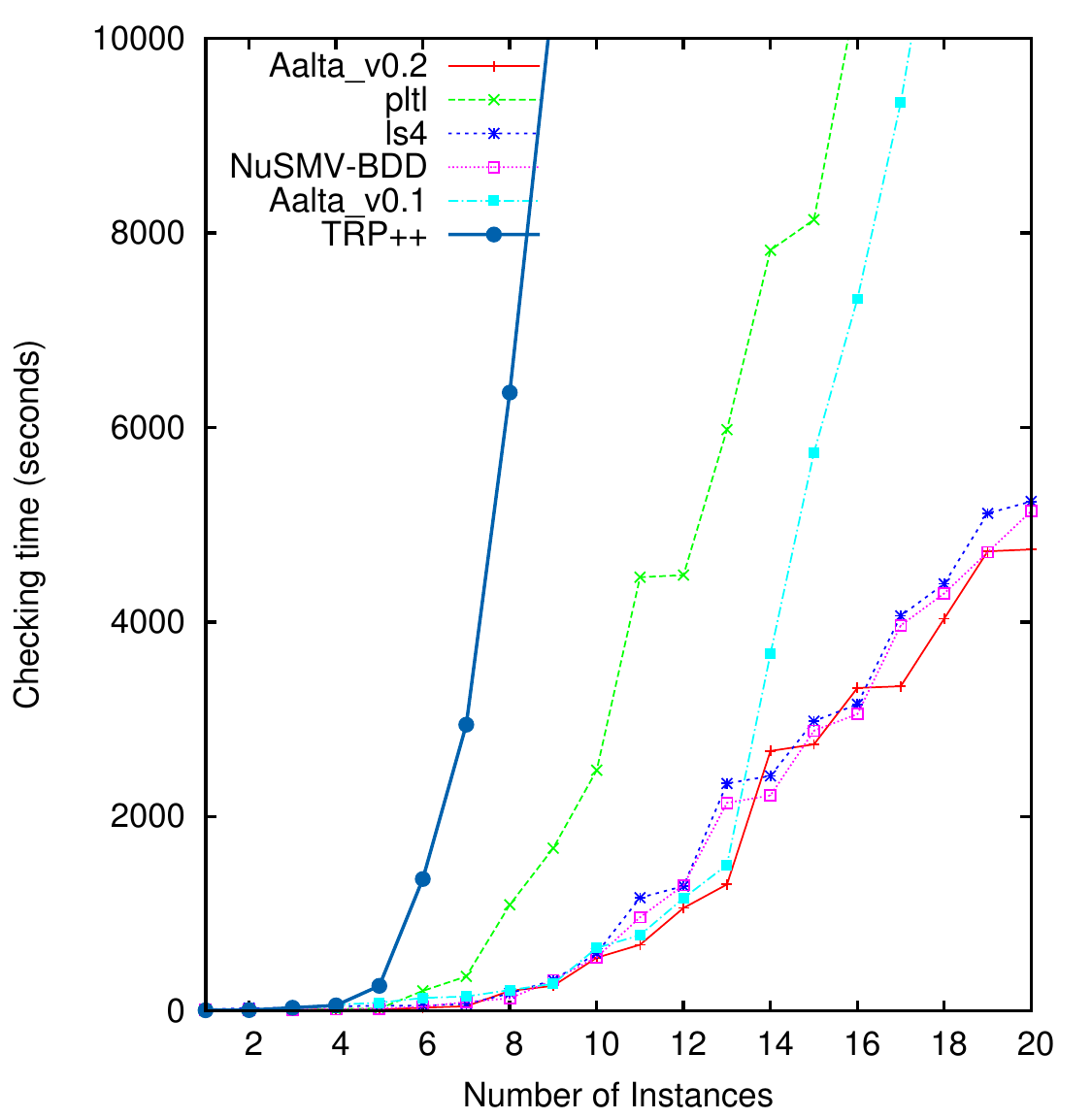}
\caption{Experimental results on random conjunction formulas.}
\label{fig:random_conjunction}
\vspace{0.3cm}
\end{minipage}
\end{figure}
\fi

%In this section we present the empirical experiments on the proposed approach in this paper.

% First of all, we would like to claim
%the conclusions we have made:
%\begin{enumerate}
%  \item The SAT-based obligation acceleration highly improves \Aalta's performance.
%  \item The SAT-based unsatisfiable checking makes \Aalta\_v0.2 more competitive.
%  \item None of tested solvers can dominate, while \Aalta\_v0.2 has a best performance in balance.
%  \item The whole SAT-based checking framework leads \Aalta\_v0.2 to be the champion on
%  random pattern formulas.
%  \item The integrated \Aalta\_v0.2 produces a significant improvement of Polsat.
%\end{enumerate}

Table \ref{tab:schuppan-collected} shows the evaluation results on 
formulas from the \textit{schuppan-collected} benchmark.
The first column lists the pattern types, and the second to seventh columns
show the checking time (seconds) of various solvers on each pattern, 
respectively.  Normally, a test benchmark of the given pattern involves two
types of formulas: satisfiable and unsatisfiable ones. We separate them in 
two rows for each entry of benchmark pattern: the upper one is the
checking time for satisfiable formulas and the lower one for
unsatisfiable formulas. Some patterns, however, do not have
unsatisfiable formulas in their data set; we just keep one row in the
table for these patterns, such as \textsf{acacia/demo-v3} and
\textsf{anzu/amba} patterns. We ignored those patterns that cannot 
be decided within the timeout by all solvers.  We also ignored those 
patterns that can be solved within 10 second by all solvers.
We highlight the entry of the best checking result for each pattern: 
blue for satisfiable cases and green for unsatisfiable cases. 

\begin{figure}
\centering
\includegraphics[scale = 0.8]{}
\caption{Experimental results on extended random formulas with 3 variables.}
\label{fig:random}
\end{figure}

\begin{figure}
\centering
\includegraphics[scale = 0.8]{}
\caption{Experimental results on random conjunction formulas.}
\label{fig:random_conjunction}
\end{figure}

%
%Every cell with data should have two columns, the above of which for the cost on satisfiable cases
%and the below for that on unsatisfiable ones. Since several patterns do not include any unsatisfiable cases, such as
%\textsf{acacia/demo-v3}, \textsf{anzu/amba} patterns, so we omit the
%zero cost in the corresponding cells. Also note here there are some cases cannot be checked within the timeout by all solvers,
%thus their cost are not included in the table. (We don't know whether they are satisfiable.) To make the data more clear, we color the cells with the best results in the table
%for every pattern: the blue color for satisfiable cases and the green for unsatisfiable ones.

From Table~\ref{tab:schuppan-collected}, we can see that the  proposed 
SAT-based approach dramatically improves the performance of the 
obligation-based satisfiability checking method of previous 
work~\cite{LZPVH13}. For instance, compared to \Aalta\_v0.1, \Aalta\_v0.2 
has a nearly 1000X speedup for the satisfiable cases in 
\textsf{trp/N5y}, and nearly 500X speedup for the unsatisfiable cases in
\textsf{schuppan/O2formula}. In total, \Aalta\_v0.2 performs about 5 times 
better than \Aalta\_v0.1.

Although the experiments confirms the fact that none of investigated solvers
dominates across all patterns, the new tool (\Aalta\_v0.2) has the most 
``wins'' with 8 best results, while pltl has 6 best results, ls4 has 4 best
results, NuSMV-BDD has 3 and TRP++ has another 2 best results. 
Moreover, \Aalta\_v0.2 has the best total time performance,
with the ls4 tool in the second place.

%MYV: I'd cut this out.
%There are also some other observations we can make: 
%1). Compared to other tools, pltl performs 
%really stable except for the \textsf{trp/N12x} pattern. So it can improve 
%rapidly if it takes over this pattern; 
%2). TRP++ and NuSMV-BDD may be better at checking satisfiability rather 
%than unsatisfiability, from their results in the table; 
%3). ls4 is quite advantageous on checking unsatisfiability, as it takes 
%the most amount of best unsatisfiable checking cases; 4). For our 
%tool \Aalta\_v0.2, its results show clearly that the SAT-based 
%obligation acceleration (Theorem \ref{thm:satoa}) dramatically performs 
%well on satisfiable formulas. 
%For unsatisfiable cases, Theorem \ref{thm:satunsat} contributes two best results on 
%\textsf{schuppan/O1formula} and \textsf{schuppan/O2formula} patterns. 
%Compared to our previous version, \Aalta\_v0.2 has improved more than 3 
%times for the whole unsatisfiable cases. (See the 
%last two cells in the last row.) So the heuristics is also very useful. 

\iffalse
\begin{figure}[t]
\begin{minipage}[b]{0.45\linewidth}
\centering
\includegraphics[scale = 0.5]{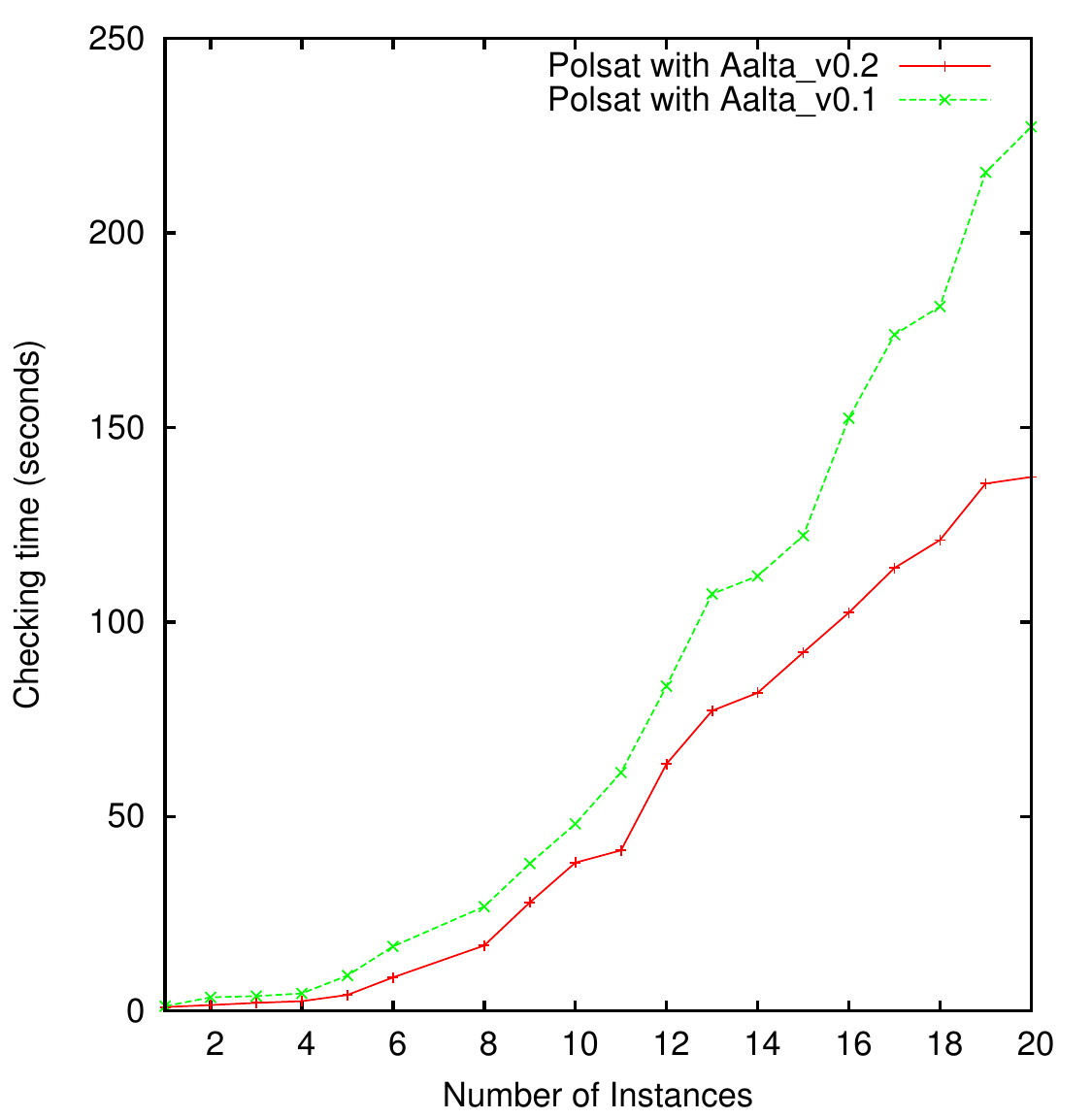}
\caption{Polsat improvement on extended satisfiable random formulas with 3 variables.}
\label{fig:polsat_sat}
\end{minipage}
\hspace{0.6cm}
\begin{minipage}[b]{0.45\linewidth}
\centering
\includegraphics[scale = 0.5]{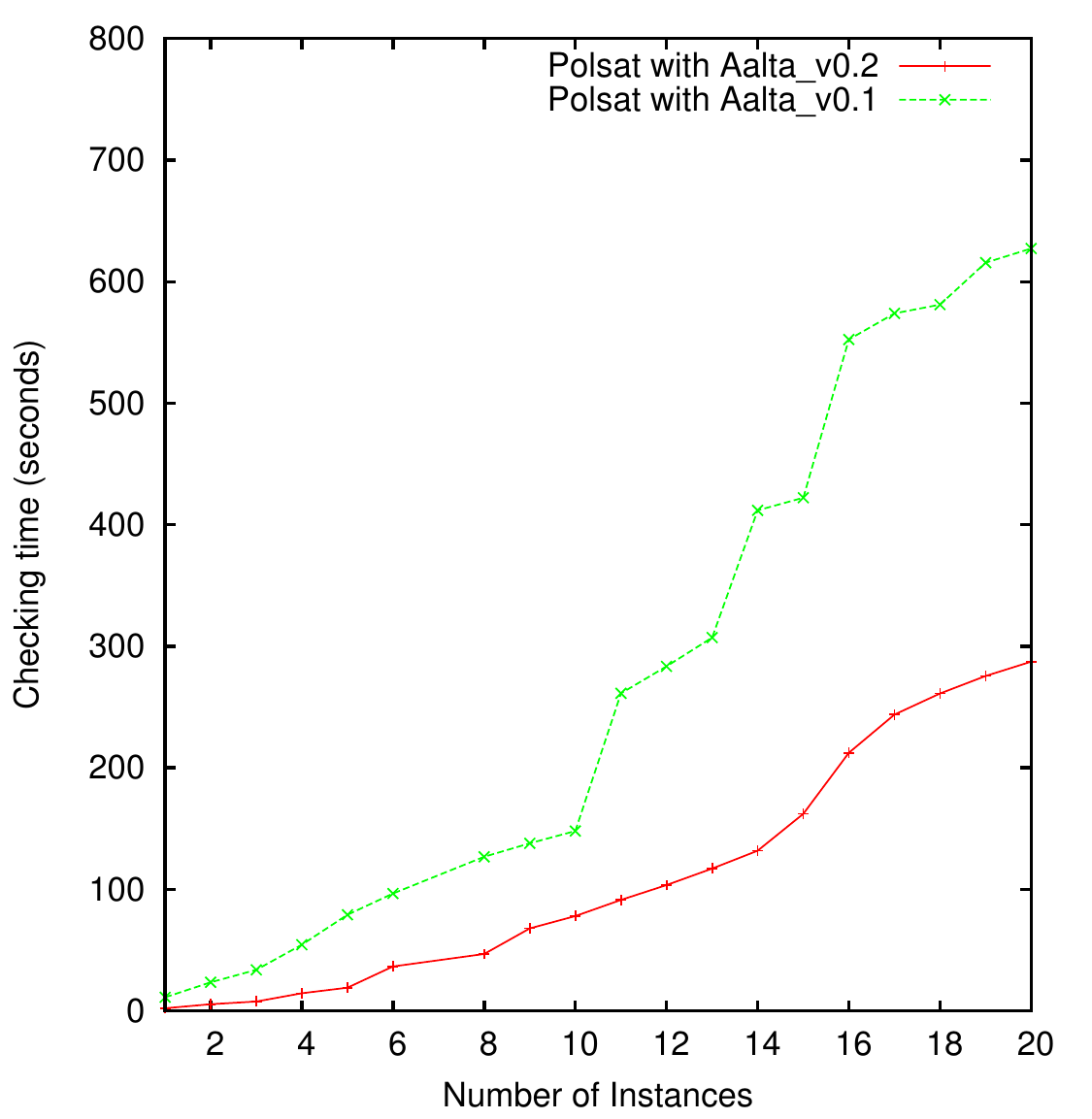}
\caption{Polsat improvement on extended unsatisfiable random formulas with 3 variables.}
\label{fig:polsat_unsat}
%\vspace{0.3cm}
\end{minipage}
\end{figure}
\fi

To test the scalability of solvers, we use large random formulas, 
including both \textit{random formulas} and 
\textit{random conjunction formulas}.
%(the length of formula is from 100 to 200). 
The results are shown in Fig. \ref{fig:random} and
Fig. \ref{fig:random_conjunction}, respectively. (We show cactus plots%
\footnote{\url{http://en.wikipedia.org/wiki/Cactus_graph}}, 
where the $x$-axis sorts the instances (a suite of 500 formulas) by 
hardness rather than by length.
Here we do not separate satisfiable and unsatisfiable formulas.
%MYV: too long.
%in which the ordering of formula lengths is missing but preserves the one 
%of checking costs. 
%That means, the Y axis in the figures represent the checking cost for a 
%group of formulas (500 cases), and the X axis represent the resorted 
%instances (groups with different lengths) by the corresponding costs. 
%Using this drawing style one can see clearly the increasing trends of 
%the checking cost for each solver. 

In Fig. \ref{fig:random} one can see the significant improvement 
by the new SAT-based checking framework. (See the gap between the results 
from \Aalta\_v0.1 and \Aalta\_v0.2.). It also see that \Aalta\_v0.2 has 
the best performance for large random formulas. 
In Fig. \ref{fig:random_conjunction} we see that the new tool is almost
best for random conjunction formulas; ls4 and NuSMV-BDD can be competitive
with \Aalta\_v0.2. 

As mentioned earlier, Polsat is not only a testing platform for LTL 
satisfiability solvers, but also a portfolio solver that provides the best 
result by integrating several solvers. We want like to know  how much the 
performance of Polsat is improved via integrating \Aalta\_v0.2 to replace 
\Aalta\_v0.1.  The answer is available in 
Table \ref{tab:schuppan-collected}. For those cases where \Aalta\_v0.2 is
best, there can be a 10X speedup in Polsat with \Aalta\_v0.2;
see \textsf{schuppan/O1formula} and \textsf{schuppan/O2formula} patterns.
\ref{fig:polsat_unsat} show the performance comparison between Polsat with 
\Aalta\_v0.1/\Aalta\_v0.2 on extended satisfiable/unsatisfiable random 
formulas (with lengths from 100 to 200). 
%MYV: What are "extended randm formulas"?
The plots show that the new proposed SAT-based approach boosts the 
performance of Polsat  by 30\% to 60\% on average for random formulas. 

\begin{figure}
\centering
\includegraphics[scale = 0.75]{}
\caption{Polsat improvement on extended satisfiable random formulas with 3 variables.}
\label{fig:polsat_sat}
\end{figure}

\begin{figure}
\centering
\includegraphics[scale = 0.75]{}
\caption{Polsat improvement on extended unsatisfiable random formulas with 3 variables.}
\label{fig:polsat_unsat}
\end{figure}

\section{Discussion and Related Work}\label{sec:related} 

The satisfiability-checking framework proposed in this paper is based on LTL 
\textit{normal forms}, which follows the traditional tableau-based expanding 
rules \cite{GPVW95}. By iteratively utilizing the normal-form generation we can 
obtain a \textit{transition system} for a given formula. 
Based on that, a decision procedure leveraging the power of modern SAT solvers is 
achieved, in which the \textit{obligation formula} plays the crucial role. 
It should be mentioned, however that a \emph{tagging} process is required to 
complete this approach.  For example consider $\phi=(a \vee b) U (Ga)$, in which 
the atom $a$ appears twice.  Without tagging, we can see that there exists a transition
$\phi\tran{a}\phi\tran{b}\phi$, which forms a SCC $B$, and
$L(B)=\{a,b\}$ is a superset of the obligation $\{a\}$. 
But, obviously, the infinite path through this SCC can not satisfy
$\phi$. This problem can be solved by tagging the input formula. To simplify 
the presentation here, we omit the details in this paper and refer readers to 
our previous work \cite{LZPVH13}.
  
Several similar ideas about the normal form and transition system are also used 
in other works.  For example, Gerth et al. pointed out every LTL formula has an 
equivalent form of $\bigvee (\alpha_i\wedge X\psi_i)$, when they proved the correctness 
of their LTL-to-GBA (Generalized B\"uchi Automata) translation \cite{GPVW95}. 
Also Duan et al. proposed a similar normal form and normal form graph for PPTL, 
which is considered as a superset of LTL\cite{DTZ08}.  Based on these two concepts, 
they presented a decision procedure for PPTL that can be used in both model checking 
and satisfiability checking. Compared to their work, our contribution 
is the obligation-formula-based checking and the heuristics that can utilize SAT solving. 
We also provide a detailed performance evaluation, which is not included in \cite{DTZ08}.

Bounded model checking (BMC) is the first approach that reduces model checking into SAT 
framework \cite{CBRZ01,CPRS02}. It encodes both the system and property symbolically, and 
unrolls the system step by step to check whether the property is satisfied. As one can keep
unrolling, this approach does not terminate, and one has to select an upper bound
$k$ for unrolling to obtain termination. In the context of LTL satisfiability checking,
BMC provides a method for detecting satisfiability, but not for detecting unsatisfiability.
To overcome this shortage of BMC, full SAT-based model-checking techniques need to be explored. 

Interpolation model checking (IMC) \cite{McM03} extends BMC by separating the $k$-reachability formula
steps into two parts, and generates an interpolant from these sets if the formula is unsatisfiable. 
Here the bound $k$ also needs to be repeatedly incremented, but, unlike BMC, this method is
guaranteed to terminate \cite{McM03}. Other SAT-based model checking techniques are evaluated
in \cite{ADKKM05}, where it is shown that no method dominates across all benchmarks.
One way to obtain a complete SAT-based model checker is by finding the ``threshold'' $ct$ such 
that $M\models_{ct}\phi\Rightarrow M\models\phi$ holds: here $M$ is the system model and $\phi$ is 
a property \cite{CKOS05}. This threshold, however, can be exponential in the size of the system
and the property, and this method does not perform well in practice.

In the experimental part we compare our tool with ls4, which is considered as a SAT-based PLTL prover \cite{SW12}. 
The approach behind the tool does not directly work on PLTL but requires the SNF (Separated Normal Form \cite{Fish97}) input. The 
interesting observation in this strategy is that, the BMC process can become complete by facilitating the SNF's structural 
features. It then takes such a complete SAT-based checking as a ``block'' and successfully extends its application from 
reachability cases to general ones. Our experiments above confirm that this approach performs well on unsatisfiability 
checking. 

Recently IC3 \cite{Bra11} emerged as a popular full SAT-based model-checking technique. 
This approach does not need to unroll the system more than one step, and it keeps an 
inductive invariant inductively during the checking process until the invariant finally 
reaches the desired property. In other words, it gradually checks whether there exist states 
in the system falsifying the property, and if so the algorithm returns false, otherwise 
returns true after exploring the whole system. Thus IC3 has the advantage that it checks also 
unsatisfiable cases.  (Note that both IMC and IC3 were proposed for checking safety properties 
first, and afterwards adapted to check the liveness ones by using the techniques shown in \cite{BAS02} and \cite{BSHZ11}.)

Compared to IMC and IC3, ours approach uses explicitly formula expanding instead of over-approximate 
inductive invariants. So the states information are stored during our checking process, easily making
it complete. In our approach, SAT invoking occurs when we extract the obligation formula (or its variant) 
from each expanded state. Hence our approach is hybrid, combining the explicit and symbolic ways together. 
Moreover, our approach applies for the whole LTL class, while IMC and IC3 are directly applicable
only for safety checking, as mentioned above. In this paper we compare our method with all LTL 
satisfiable-checking algorithms appearing in \cite{SD11} and find it is quite efficient. We leave to
future work to set up a comprehensive comparison with the full SAT-based model checking approaches 
such as IMC and IC3. And the comparison on SAT solver invoking times among different solvers will 
also be involved, which is well concerned in SAT community.

\section{Conclusion}\label{sec:con}

In this paper we proposed a fully-SAT-based LTL-satisfiability-checking approach. Our experiments show the new method significantly improves
the performance of Polsat, a portfolio-based LTL satisfiability solver. Thus, we believe that SAT-based LTL satisfiability checking has a promising future.

\section{Acknowledgement}
The authors are thankful for valuable comments and tool guidance from Rajeev Gor\'e, Ullrich Hustadt and Victor Schuppan. 

\bibliographystyle{plain}
\bibliography{ok,cav}

\end{document}